\documentclass[12pt,a4paper]{article}

\usepackage[truedimen,margin=30mm]{geometry} 

\usepackage{amssymb}
\usepackage{amsmath}
\usepackage[hiresbb]{graphicx}
\usepackage{amsthm}   %This is necessary for "proof"
\usepackage{authblk}   %This is necessary for author descriptions
\usepackage{lscape}   %This is necessary for "landscape"
\usepackage{url}
\usepackage{multirow}
\usepackage{natbib}

\usepackage{setspace}

\usepackage{times}
\usepackage{dsfont}
\usepackage{subcaption}
\usepackage{here}
\usepackage{booktabs, threeparttable}

\makeatletter
\def\mojiparline#1{
    \newcounter{mpl}
    \setcounter{mpl}{#1}
    \@tempdima=\linewidth
    \advance\@tempdima by-\value{mpl}zw
    \addtocounter{mpl}{-1}
    \divide\@tempdima by \value{mpl}
    \advance\kanjiskip by\@tempdima
    \advance\parindent by\@tempdima
}
\makeatother
\def\linesparpage#1{
    \baselineskip=\textheight
    \divide\baselineskip by #1
}

\newtheorem{prop}{Proposition}

\allowdisplaybreaks[4]

\newcommand{\bld}{\boldsymbol}

\title{ {\bf Efficient Bayesian Inference in the Cox Model via Rank-Ordered Likelihood } }

\author[1]{Tomohiro Ohigashi}
\author[2]{Shunichiro Orihara}
\author[3]{Shonosuke Sugasawa}

\affil[1]{Department of Information and Computer Technology, Faculty of Engineering, Tokyo University of Science, Tokyo, Japan}
\affil[2]{Department of Health Data Science, Tokyo Medical University, Tokyo, Japan}
\affil[3]{Faculty of Economics, Keio University, Tokyo, Japan}

\date{}

\begin{document}
\linesparpage{25}
\allowdisplaybreaks[4]
\begin{singlespace}
\maketitle
\end{singlespace}

\vspace{-1cm}
\begin{center}
{\large\bf Abstract}
\end{center}
In Bayesian inference for the Cox proportional hazards model, modeling the baseline hazard function is challenging. 
Recently, direct Bayesian inference using the partial likelihood is considered in the framework of general Bayesian inference. 
In terms of posterior computation, several studies have examined sampling algorithms under the Cox model. 
In this study, we propose two Gibbs sampling algorithms for Bayesian inference in the Cox proportional hazards model, motivated by a rank-ordered data representation and based on the Plackett--Luce and generalized Plackett--Luce models with P'{o}lya--Gamma data augmentation, referred to as PL-Cox and GPL-Cox, respectively.
The two proposed methods offer practical advantages, as they do not require correction of posterior samples, naturally handle tied event times, and are readily extensible to shared frailty models.
In simulation study, we considered multiple survival model settings, including continuous and discrete survival time models, as well as scenarios with varying degrees of ties, and found that the PL-Cox model exhibited relatively stable performance.
In analyses of a large real dataset, the proposed methods remained computationally feasible, and the GPL-Cox model showed more favorable computational scalability than the PL-Cox model.
In analyses of real data incorporating shared frailty, both methods demonstrated good computational efficiency.

\bigskip\noindent
{\bf Key words}: Cox proportional hazards model; generalized Plackett--Luce model; Gibbs sampling; partial likelihood; P\'{o}lya--Gamma data augmentation; survival analysis; tied data

\section{Introduction}

For survival/time-to-event data analysis, the Cox proportional hazards model \citep{coxRegressionModelsLifeTables1972}, or the Cox model, is commonly used in various fields, including medical research, economics, and reliability engineering.
In the frequentist framework, the partial likelihood is usually employed to estimate hazard ratios for the Cox model, as it does not require the specification of the baseline hazard \citep{coxPartialLikelihood1975}.
For the Cox model, there are well-established methods for handling tied event times, which are frequently observed in survival data, including two approximation methods, the Breslow and Efron methods, and the exact method \citep{kalbfleischStatisticalAnalysisFailure2002}.
These methods within the frequentist framework are implemented using standard statistical software packages and are extensively utilized.

In Bayesian inference for the Cox model,
\citet{kalbfleischNonParametricBayesianAnalysis1978a} provided the first Bayesian interpretation of the Cox model by assigning a gamma process prior to the cumulative baseline hazard, showing that the partial likelihood arises as the marginal likelihood under this formulation.
\citet{sinhaBayesianJustificationCoxs2003} extended this justification, illustrating that the partial likelihood for the regression coefficient aligns with the marginal posterior for the regression coefficient derived from the gamma process prior for the cumulative baseline hazard, even in the presence of time-dependent covariates and tied event times.
More recently, direct Bayesian inference using the partial likelihood is considered in the framework of general Bayesian inference \citep{bissiriGeneralFrameworkUpdating2016}.
Regarding posterior computation, several studies have examined sampling algorithms for the Cox model, although no standard approach has yet been established.
When the baseline hazard is modeled explicitly, posterior sampling must accommodate a nuisance component, whereas direct use of the partial likelihood generally does not yield closed-form full conditional distributions.
These difficulties become more pronounced in the presence of tied event times, frailty terms, or other multilevel structure.
\citet{sinhaBayesianJustificationCoxs2003} developed a Metropolis--Hastings algorithm with adaptive rejection for the marginal posterior of regression coefficients. 
Additionally, \citet{renCoxPolyaGammaAlgorithmFlexible2025} proposed a Gibbs sampling algorithm for the Cox model.
This method models the log-cumulative baseline hazard using a monotone spline approximation and applies P\'{o}lya--Gamma (PG) data augmentation \citep{polsonBayesianInferenceLogistic2013}, with a negative binomial approximation to the underlying Poisson process. 
Although Metropolis--Hastings corrections are implemented to address approximation bias, this method requires a trade-off between computational efficiency and approximation accuracy.
Furthermore, the treatment of tied event times is not explicitly included in the algorithmic formulation of this method.
More recently, \citet{tamanoEfficientGibbsSampling2025} introduced an efficient sampling procedure based on composite partial likelihood and PG data augmentation.
This method constructs a calibrated target distribution through an affine open-faced sandwich adjustment based on \citet{tamanoLocationScaleCalibrationGeneralized2025}, producing draws aligned with the partial likelihood benchmark.
This method naturally handles tied event times because of the composite partial likelihood.
However, as this method relies on a composite partial likelihood without specifying a full hierarchical data-generating mechanism and employs an estimating-equation-based calibration requiring explicit score and derivative information of the loss, its extension to multilevel survival models with latent frailty components using the proposed framework would be difficult.
Although general-purpose posterior computation tools, such as Stan, are available, the computational burden may become substantial when dealing with complex models or large-scale datasets.
Therefore, this study focused on algorithms that can be readily customized, such as the Gibbs sampler, to achieve efficient posterior computation.

In this study, we propose two Gibbs sampling algorithms for the Cox model based on the modeling of rank-ordered data. 
The central concept of the proposed framework is the Plackett--Luce (PL) model, which provides a probabilistic model for rank-ordering.
The PL model is tractable because it admits a generative representation based on the ordering of independent exponential latent variables \citep{bakerNewOrderstatisticsbasedRanking2020, bakerModifyingBradleyTerry2021}.
\citet{bakerModifyingBradleyTerry2021} introduced a model that replaces exponential latent variables with their discrete counterparts, specifically geometric latent variables.
\citet{hendersonModellingAnalysisRank2025} referred to this model as the generalized (or geometric) PL (GPL) model and derived an explicit form for the likelihood that facilitates maximum likelihood and Bayesian inference.
They also provided derivations of efficient Gibbs sampling algorithms.
Motivated by these developments, we develop PL-Cox and GPL-Cox, which combine likelihood formulations based on the PL and GPL models with PG data augmentation.
These algorithms allow tied event times to be handled within a rank-ordered likelihood framework, while avoiding the need for correction of posterior samples.
Furthermore, these algorithms can be extended to shared frailty models.
For instance, when log-normal frailties are introduced, the resulting posterior distribution remains tractable within a Gibbs sampling scheme.

The remainder of this paper is organized as follows.
In Section \ref{preliminaries}, we review the Cox model, the rank-ordered likelihood formulations based on the PL and GPL models, and the rank-ordered likelihood representation for the Cox model.
In Section \ref{method}, we develop the proposed PL-Cox and GPL-Cox models and describe posterior computation.
In Section \ref{simulation}, we present simulation studies to evaluate the performance of the proposed methods.
In Section \ref{application}, we illustrate the proposed methods through two applications to real datasets.
We conclude our paper in Section \ref{discussion} with further discussion.

\section{Preliminaries\label{preliminaries}}

\subsection{Cox model}
For each subject $i \in \{1,\ldots,n\}$, we observe $\boldsymbol{D} = \{D_i\}^n_{i=1} = \{(T_i, \delta_i, \boldsymbol{x}_i)\}^n_{i=1}$, where $T_i \in \mathbb{R}_{+}$ denotes the observed time defined as $T_i = \min (T^\ast_i, C^\ast_i)$, with $T_i^\ast$ and $C_i^\ast$ representing the event and censoring times, respectively.
Let $\delta_i$ represent the event indicator, where $\delta_i = 1$ if $T^\ast_i \leq C^\ast_i$; otherwise, $\delta_i = 0$.
Let $\boldsymbol{x}_i$ denote the $p$-dimensional covariate vector.
In the Cox model \citep{coxRegressionModelsLifeTables1972}, the hazard function for subject $i$ at time $t$ is given by $h(t \mid \boldsymbol{x}_i) = h_0 (t) \exp (\boldsymbol{x}^\top_i \boldsymbol{\beta})$ where $h_0(t)$ denotes the baseline hazard function and $\boldsymbol{\beta}$ is a $p$-dimensional vector of regression coefficients.
In frequentist inference based on the partial likelihood \citep{coxPartialLikelihood1975}, the specification of the baseline hazard function is not required.
Accordingly, the partial likelihood can be written as
\begin{equation}\label{partiallike}
    L (\boldsymbol{\beta} \mid \boldsymbol{D}) := \prod^n_{i=1} \left\{ \frac{\exp (\boldsymbol{x}^\top_i \boldsymbol{\beta})}{\sum_{\ell \in R(T_i)} \exp (\boldsymbol{x}^\top_\ell \boldsymbol{\beta})} \right\}^{\delta_i},
\end{equation}
where $R(t)$ is the risk set at time $t$.

The partial likelihood is formulated based on the assumption that event times are continuously distributed, which implies that the probability of ties occurring is zero.
In practice, tied event times may occur due to rounding, grouped observations, or inherently discrete time scales \citep{kalbfleischStatisticalAnalysisFailure2002}. 
Approximation methods, such as the Breslow and Efron approaches, are commonly used, which modify the denominator of the partial likelihood (\ref{partiallike}) within tied blocks.
Conversely, when event times are intrinsically discrete, the exact conditional likelihood based on all possible combinations within tied sets is theoretically appropriate \citep{kalbfleischStatisticalAnalysisFailure2002}.

\subsection{The PL and GPL models\label{sec:plgpl}}
The PL model provides a probabilistic model for rank-ordered data and is widely used for analyses of preferences, competitions, and choice data.
Let $n$ items be associated with positive parameters $\lambda_1, \ldots, \lambda_n$, which denote the relative strengths of these items.
Let $\boldsymbol{y} = (y_1, \ldots, y_n)$ represent a permutation that ranks the items, where $y_k$ indicates the item ranked at position $k$.
The PL model specifies the probability of observing the ranking $\boldsymbol{y}$ as
\begin{equation}\label{eq_pl}
P(\boldsymbol{Y}=\boldsymbol{y}\mid \boldsymbol{\lambda})
=
\prod_{k=1}^{n-1}
\frac{\lambda_{y_k}}
{\sum_{r=k}^{n} \lambda_{y_r}} .
\end{equation}
An important characteristic of the PL model is its generative representation, which is based on exponential latent variables.
Suppose that $W_i \sim \text{Exp}(\lambda_i)$ independently.
The ranking obtained by ordering the latent variables $W_i$ in ascending order adheres to the PL model.
This representation has been exploited to derive efficient inference algorithms \citep{bakerNewOrderstatisticsbasedRanking2020, bakerModifyingBradleyTerry2021}.

\citet{bakerModifyingBradleyTerry2021} introduced a discrete counterpart of the PL model by replacing exponential latent variables with geometric latent variables.
Let $W_i \sim \text{Geom}(\theta_i)$ represent independent geometric random variables.
The ranking induced by ordering $W_i$ in increasing order defines the GPL model.
\citet{hendersonModellingAnalysisRank2025} derived an explicit likelihood representation for this model and demonstrated that it facilitates both maximum likelihood and Bayesian inference.
Each success probability $\theta_i$ influences both the probability of an item appearing at higher ranks and the incidence of ties.
Larger values of $\theta_i$ imply smaller geometric latent variables on average, so that the corresponding item is more likely to be selected early. 
At the same time, when multiple $\theta_i$ are large, the associated latent variables are more likely to attain the same small value, thereby increasing the probability of tied outcomes.
Thus, unlike the PL model, the GPL model depends on the overall scale of the success probabilities as well as on their relative magnitudes.

\subsection{Rank-ordered likelihood for the Cox model}

The partial likelihood (\ref{partiallike}) can be naturally interpreted within the framework of rank-ordered outcomes. 
Each observed event can be viewed as the selection of a subject from the corresponding risk set.
Under the Cox model, the probability that subject $i$ experiences the event at time $t$ among the subjects in the risk set $R(t)$ is proportional to $\exp(\boldsymbol{x}_i^\top \boldsymbol{\beta})$.
Consequently, the partial likelihood can be interpreted as a sequence of selections from changing risk sets.
This interpretation highlights the close connection between survival models and probabilistic models for ranked data.
\citet{suConnectionBradleyTerry2006} showed that the partial likelihood aligns with the likelihood of the Bradley--Terry (BT) model for rank-order events under a stratified proportional hazards formulation.
Consider two subjects $i$ and $j$ with event times $T_i$ and $T_j$.
Under a common baseline hazard, the probability that subject $i$
fails before subject $j$ is given by
\[
P(T_i < T_j) =
\frac{\exp(\boldsymbol{x}_i^\top \boldsymbol{\beta})}
{\exp(\boldsymbol{x}_i^\top \boldsymbol{\beta}) + \exp(\boldsymbol{x}_j^\top \boldsymbol{\beta})},
\]
which coincides with the probability assigned by the BT model for paired comparisons.

More generally, consider $K$ subjects with ordered event times $T_{(1)} < T_{(2)} < \cdots < T_{(K)}$.
The probability of this ordered event is given by
\[
\prod_{k=1}^{K-1}
\frac{\exp(\boldsymbol{x}_{(k)}^\top \boldsymbol{\beta})}
{\sum_{\ell \in R(T_{(k)})} \exp(\boldsymbol{x}_{\ell}^\top \boldsymbol{\beta})},
\]
which corresponds to the likelihood of the PL ranking model.
The above representation shows that the partial likelihood can be interpreted as the likelihood of rank-ordered outcomes generated through a sequential selection mechanism.
The contribution of each event time has the same functional form as the likelihood component of the PL model applied to the corresponding risk set.
Therefore, the partial likelihood can be viewed as a special case of a rank-ordered likelihood derived from the PL model (\ref{eq_pl}).
This connection provides a useful perspective for extending the Cox model.
The likelihood formulations based on the PL and GPL models introduced in Subsection~\ref{sec:plgpl} naturally lead to full-likelihood representations for survival data.
These formulations facilitate Bayesian inference using standard data augmentation techniques.

\section{Bayesian inference for the PL/GPL Cox models\label{method}}
\subsection{PL-Cox and GPL-Cox models}
We propose a full-likelihood framework for Bayesian inference for the Cox model based on two rank-ordered likelihood formulations: the PL-Cox and GPL-Cox models.

Let $t_{(1)} < \cdots < t_{(R)}$ denote the ordered distinct event times among $\{T_i : \delta_i = 1\}$.
For each distinct event time $t_{(r)}$, define $R_r := R\{t_{(r)}\}$ as the corresponding risk set, and let $E_r \subset R_r$ denote the set of subjects who experience the event at time $t_{(r)}$.
Let $d_r = |E_r|$ denote the number of events in the $r$th tied block.

\subsubsection{PL-Cox model}
For the PL-Cox model, the positive weight is defined as $\lambda_i = \exp(\boldsymbol{x}_i^\top \boldsymbol{\beta})$, where $\boldsymbol{\beta}$ is a $p$-dimensional regression coefficient.
The linear predictor may include an intercept term.
Although the classical Cox partial likelihood is invariant to such a constant shift, the intercept can be retained in the present formulation without affecting the likelihood structure, and is convenient for the augmented likelihood representation used below, particularly from a computational perspective.
The likelihood is given by
\begin{equation}\label{like_PL}
L_{\mathrm{PL}}(\boldsymbol{\beta})
=
\prod_{r=1}^{R}
\prod_{m=1}^{d_r}
\frac{\lambda_{y_{rm}}}{\sum_{j \in R_r} \lambda_j}
=
\prod_{r=1}^{R}
\frac{\displaystyle \prod_{i \in E_r} \lambda_i}
{\displaystyle \bigg( \sum_{j \in R_r} \lambda_j \bigg)^{d_r}},    
\end{equation}
where $(y_{r1},\ldots,y_{rd_r})$ denotes an arbitrary ordering of the subjects in $E_r$.
In the special case $d_r = 1$ for all $r$, this reduces to the usual Cox partial likelihood.
When $d_r > 1$, the above likelihood coincides with the Breslow approximation for handling ties in the Cox model.
Thus, the PL-Cox model can be interpreted as a ranking-based representation of the Cox model using the Breslow method.

\subsubsection{GPL-Cox model}
In the GPL model, each subject is associated with a success probability $\theta_i \in (0,1)$, and the ranking mechanism is induced by independent geometric latent variables.
In the GPL-Cox model, these probabilities are linked to covariates through $\theta_i = \operatorname{expit}(\eta_i)$ where $\eta_i = \boldsymbol{x}_i^\top \boldsymbol{\beta}$, $\boldsymbol{x}_i$ includes the intercept term, and $\boldsymbol{\beta}$ is the corresponding regression coefficient vector.
In contrast to the PL-Cox model, the GPL-Cox model is not invariant to a common shift in the linear predictor.
The magnitude of $\theta_i$ determines the probability of tied events, and the intercept term plays a crucial role in controlling the overall event rate.

At each distinct event time $t_{(r)}$, the observed data consist of a top-$d_r$ ranking on subset $R_r$, in which the subjects in $E_r$ are tied in the top bucket and the remaining subjects in $R_r \setminus E_r$ are unranked below them.
Specializing Henderson's top-$m$ GPL likelihood to this setting yields the contribution at time $t_{(r)}$
\[
L_{\mathrm{GPL},r}(\boldsymbol{\beta})
=
\frac{
\displaystyle
\prod_{i \in E_r} \theta_i
\prod_{i \in R_r \setminus E_r} (1-\theta_i)
}{
\displaystyle
1 - \prod_{i \in R_r} (1-\theta_i)
}.
\]
Accordingly, the GPL-Cox likelihood is
\begin{equation}\label{like_GPL}
L_{\mathrm{GPL}}(\boldsymbol{\beta})
=
\prod_{r=1}^{R}
L_{\mathrm{GPL},r}(\boldsymbol{\beta})
=
\prod_{r=1}^{R}
\frac{
\displaystyle
\prod_{i \in E_r} \theta_i
\prod_{i \in R_r \setminus E_r} (1-\theta_i)
}{
\displaystyle
1 - \prod_{i \in R_r} (1-\theta_i)
}.
\end{equation}

This formulation is interpreted as a ranking-based likelihood that accounts for ties, rather than as a standard continuous-time hazard model.
Under the GPL-Cox model, each subject in a risk set is associated with a geometric latent variable, and the observed event set corresponds to the subjects sharing the smallest realized value.
Hence, $\theta_i = \operatorname{expit}(\eta_i)$ controls both the probability of subject $i$ belonging to the event set and the probability of tied events within the risk set.
Accordingly, the regression coefficients describe covariate effects on a latent ranking mechanism that determines which subjects enter the event set, and the model approaches PL-Cox when the success probabilities are uniformly small; see Subsection~\ref{relationship}.

\subsubsection{Latent variable representation}
The PL-Cox and GPL-Cox models allow convenient representation of latent variables.
In the PL-Cox model, the ranking is induced by exponential latent variables with rates $\lambda_i$.
In the GPL-Cox model, the ranking mechanism can be represented using geometric latent variables. For each risk set $R_r$, a latent variable $Z_r$ is introduced, whose distribution depends on the success probabilities of all subjects in the risk set.
The observed tied event set $E_r$ corresponds to the top bucket determined by the smallest realized geometric latent values in the $r$th risk set.
This representation is useful for posterior computation because the denominator term $1 - \prod_{i \in R_r} (1-\theta_i)$ can be handled through standard data augmentation, leading to an efficient Gibbs sampling algorithm in the next subsection.

\subsection{Posterior computation}

\subsubsection{PL-Cox model\label{post_pl}}
To facilitate posterior computation for the likelihood in (\ref{like_PL}), we introduce the latent variables $Z_r > 0, r=1,\ldots,R$, using the identity
\[
\frac{1}{A_r^{d_r}} = \frac{1}{\Gamma(d_r)} \int_0^\infty z_r^{d_r-1}\exp(-A_r z_r)\,dz_r, A_r>0,
\]
which follows directly from the definition of the gamma function.
Applying this identity with $A_r = \sum_{j \in R_r} \lambda_j$, the augmented likelihood can be written as
\[
L_{\mathrm{PL}}^{\ast}(\boldsymbol{\beta},\boldsymbol{Z})
\propto
\prod_{r=1}^{R}
\left[
z_r^{d_r-1}
\prod_{i \in E_r} \lambda_i
\exp\!\left\{
-z_r \sum_{j \in R_r} \lambda_j
\right\}
\right].
\]
This expression can be rearranged into an individual-specific form.
Define $c_i = \sum_{r=1}^{R} \mathds{1}(i \in E_r), \zeta_i = \sum_{r=1}^{R} \mathds{1}(i \in R_r)\, z_r$.
Then
\[
L_{\mathrm{PL}}^{\ast}(\boldsymbol{\beta},\boldsymbol{Z})
\propto
\prod_{i=1}^{n}
\lambda_i^{c_i}\exp(-\zeta_i \lambda_i)
\prod_{r=1}^{R} z_r^{d_r-1}.
\]
Hence, conditional on $\boldsymbol{Z}$, the likelihood contribution for subject $i$ has the kernel of a Poisson likelihood with the mean $\zeta_i \lambda_i$.

However, the resulting conditional posterior distribution for $\boldsymbol{\beta}$ is not available in a suitable form for Gibbs sampling.
To address this issue, we employ a Poisson--Gamma construction similar to that, as used by \citet{d2023efficient,hamuraRobustBayesianModeling2025}.
We introduce latent variables $\xi_i \sim \mathrm{Gamma}(\delta,\delta)$ independently, where $\delta>0$ is a fixed approximation parameter, and consider the augmented model $c_i \mid \boldsymbol{\beta}, \boldsymbol{Z}, \xi_i \sim \mathrm{Poisson}(\xi_i\, \zeta_i \lambda_i)$.
Since $\mathrm{E}(\xi_i)=1$ and $\mathrm{Var}(\xi_i)=1/\delta$, the distribution of $\xi_i$ concentrates around 1 as $\delta$ increases, so that the augmented likelihood becomes increasingly close to the original Poisson likelihood.
Marginalizing over $\xi_i$ yields a negative binomial approximation
\[
c_i \mid \boldsymbol{\beta}, \boldsymbol{Z}
\approx
\mathrm{NegBinom}
\left(
\delta,\,
\frac{\zeta_i \lambda_i}{\delta+\zeta_i \lambda_i}
\right).
\]

This representation leads to a logistic-type likelihood and allows the use of the PG augmentation of \citet{polsonBayesianInferenceLogistic2013}.
Let $\psi_i = \boldsymbol{x}_i^\top \boldsymbol{\beta} + \log(\zeta_i/\delta)$.
Introducing PG latent variables $\omega_i \mid \boldsymbol{\beta},\boldsymbol{Z},\boldsymbol{D} \sim \mathrm{PG}(c_i+\delta,\psi_i)$ yields a Gaussian full conditional distribution for $\boldsymbol{\beta}$ under a normal prior $\boldsymbol{\beta} \sim \mathrm{N}(\boldsymbol{b}_0,V_0)$.
Let $\kappa_i=(c_i-\delta)/2$, $\kappa=(\kappa_1,\ldots,\kappa_n)^\top$, 
$\Omega=\mathrm{diag}(\omega_1,\ldots,\omega_n)$, and 
$\boldsymbol{o}=(\log(\zeta_1/\delta),\ldots,\log(\zeta_n/\delta))^\top$.
The conditional posterior distribution is
$\boldsymbol{\beta} \mid \boldsymbol{\omega},\boldsymbol{Z},\boldsymbol{D} \sim \mathrm{N}(B^{-1} \boldsymbol{g},B^{-1})$
where $B=X^\top\Omega X+V_0^{-1}$ and $\boldsymbol{g}=X^\top(\kappa-\Omega\boldsymbol{o})+V_0^{-1}\boldsymbol{b}_0$.

Therefore, posterior computation for the PL-Cox model can be carried out by Gibbs sampling via the following updates:
\begin{align*}
Z_r \mid \boldsymbol{\beta}, \boldsymbol{D}
&\sim
\mathrm{Gamma}\!\left(d_r,\sum_{j \in R_r}\lambda_j\right),
\qquad r=1,\ldots,R,\\
\omega_i \mid \boldsymbol{\beta}, \boldsymbol{Z}, \boldsymbol{D}
&\sim
\mathrm{PG}(c_i+\delta,\psi_i),
\qquad i=1,\ldots,n,\\
\boldsymbol{\beta} \mid \boldsymbol{\omega}, \boldsymbol{Z}, \boldsymbol{D}
&\sim
\mathrm{N}(B^{-1}\boldsymbol{g},\,B^{-1}).
\end{align*}
Detailed derivations of these full conditional distributions are provided in Appendix A in the Supplementary Material.

\subsubsection{GPL-Cox model}
To facilitate posterior computation for the likelihood in (\ref{like_GPL}), we introduce the geometric latent-variable representation described in the previous subsection.
We introduce $c_i$ and $\zeta_i$ in the same manner as in the PL-Cox model.
Under this representation, the augmented likelihood has the kernel
\[
L_{\mathrm{GPL}}^{\ast}(\boldsymbol{\beta},\boldsymbol{Z})
\propto
\prod_{i=1}^{n}
\theta_i^{c_i}(1-\theta_i)^{\zeta_i-c_i}.
\]

Using the logistic parameterization introduced above, the augmented likelihood can be written as
\[
L_{\mathrm{GPL}}^{\ast}(\boldsymbol{\beta},\boldsymbol{Z})
\propto
\prod_{i=1}^{n}
\frac{\exp(c_i\eta_i)}
{(1+\exp(\eta_i))^{\zeta_i}}.
\]
This representation has the form of a logistic likelihood and, thus, admits the PG augmentation of \citet{polsonBayesianInferenceLogistic2013}.

Introducing latent variables $\omega_i \mid \boldsymbol{\beta},\boldsymbol{Z},\boldsymbol{D} \sim \mathrm{PG}(\zeta_i,\eta_i)$ yields a Gaussian full conditional distribution for $\boldsymbol{\beta}$ under a normal prior $\boldsymbol{\beta} \sim \mathrm{N}(\boldsymbol{b}_0,V_0)$.
Let $\kappa_i=c_i-\zeta_i/2$, $\boldsymbol{\kappa}=(\kappa_1,\ldots,\kappa_n)^\top$, and
$\Omega=\mathrm{diag}(\omega_1,\ldots,\omega_n)$.
Then $\boldsymbol{\beta}\mid \boldsymbol{\omega},\boldsymbol{Z},\boldsymbol{D} \sim \mathrm{N}(B^{-1}\boldsymbol{g},B^{-1})$, where $B=X^\top\Omega X+V_0^{-1}, \boldsymbol{g}=X^\top\boldsymbol{\kappa}+V_0^{-1}\boldsymbol{b}_0$.

Therefore, posterior computation for the GPL-Cox model proceeds via Gibbs sampling with updates:
\begin{align*}
Z_r \mid \boldsymbol{\beta},\boldsymbol{D}
&\sim
\mathrm{Geom}\!\left(
1-\prod_{j \in R_r}(1-\theta_j)
\right),
\\
\omega_i \mid \boldsymbol{\beta},\boldsymbol{Z},\boldsymbol{D}
&\sim
\mathrm{PG}(\zeta_i,\eta_i),
\\
\boldsymbol{\beta} \mid \boldsymbol{\omega},\boldsymbol{Z},\boldsymbol{D}
&\sim
\mathrm{N}(B^{-1}g,B^{-1}).
\end{align*}
Detailed derivations of these full conditional distributions are provided in Appendix B in the Supplementary Material.

\subsubsection{Extensions to hierarchical models}
The conditional Gaussian structure induced by PG augmentation, as in \citet{polsonBayesianInferenceLogistic2013}, also makes hierarchical extensions natural.
Therefore, the PL-Cox and GPL-Cox models can be extended to shared frailty models.
For example, consider a log-normal frailty term introduced at the cluster level: under the PL-Cox and GPL-Cox models, the conditional distribution of the frailty effects remains Gaussian, which allows them to be updated within the Gibbs sampler without requiring additional Metropolis--Hastings steps.
Therefore, both the PL-Cox and GPL-Cox models can be easily extended to multilevel survival data settings.

In addition, the proposed framework naturally allows extensions to priors for $\boldsymbol{\beta}$ beyond the normal prior.
For example, priors that can be expressed as scale mixtures of normals, including Student-$t$ and horseshoe priors \citep{carvalhoHorseshoeEstimatorSparse2010}, would be feasible through the introduction of additional latent scale parameters while retaining the Gibbs sampling framework.

\subsection{Relationship between the PL and GPL models\label{relationship}}

The PL and GPL models provide two probabilistic mechanisms for generating rank-ordered outcomes.
Although the PL model is based on exponential latent variables, the GPL model relies on geometric latent variables.
Despite these differences, the two formulations are closely related.
\citet{hendersonModellingAnalysisRank2025} derived an explicit likelihood representation of the GPL model and showed that the PL model arises as a limiting special case of the GPL model when the success probabilities are parameterized as $\theta_i = \lambda_i \theta_1$ with fixed $\lambda_i > 0$ and $\theta_1 \to 0$, in the absence of ties.

A related limiting relationship also holds for the GPL-Cox model under the logistic link.
Let the linear predictor be expressed as $\eta_i=\alpha+\boldsymbol{x}_i^{\ast \top}\boldsymbol{\beta}$, where $\alpha\in\mathbb{R}$ is the intercept and $\boldsymbol{x}_i^{\ast}$ excludes the intercept term, so that $\theta_i=\operatorname{expit}(\eta_i)$.

\begin{prop}
Under the above parameterization, as $\alpha\to -\infty$, the GPL-Cox likelihood approaches the PL-Cox likelihood, provided that each event time corresponds to a single failure.
\end{prop}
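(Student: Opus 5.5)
The approach is to work factor by factor. Both likelihoods are products over the distinct event times $r=1,\ldots,R$, so it suffices to show that each GPL-Cox contribution $L_{\mathrm{GPL},r}$ behaves like the corresponding PL-Cox contribution as $\alpha\to-\infty$. Under the assumption $d_r=1$ for all $r$, write $E_r=\{i_r\}$ and set $\lambda_i=\exp(\boldsymbol{x}_i^{\ast\top}\boldsymbol{\beta})$. Then $\theta_i=\operatorname{expit}(\alpha)\cdot\{\text{correction}\}$. More usefully,
\[
\theta_i=\frac{e^{\alpha}\lambda_i}{1+e^{\alpha}\lambda_i}=e^{\alpha}\lambda_i\{1+O(e^{\alpha})\},
\]
uniformly over the finitely many subjects, with $\boldsymbol{\beta}$ held fixed. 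This is exactly Henderson's parameterization $\theta_i\approx\lambda_i\theta_1$ with $\theta_1=e^{\alpha}\to0$.

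The key steps, in order, are these.

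\emph{Numerator.} We have $\theta_{i_r}=e^{\alpha}\lambda_{i_r}(1+o(1))$. Also $\prod_{i\in R_r\setminus E_r}(1-\theta_i)\to1$, since each $\theta_i\to0$ and the product is finite.

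\emph{Denominator.} Expand
\[
\log\prod_{i\in R_r}(1-\theta_i)=-\sum_{i\in R_r}\theta_i+O\Big(\sum_{i\in R_r}\theta_i^2\Big),
\]
which gives
\[
1-\prod_{i\in R_r}(1-\theta_i)=e^{\alpha}\sum_{j\in R_r}\lambda_j\,(1+O(e^{\alpha})).
\]

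\emph{Ratio.} Dividing, the common factor $e^{\alpha}$ cancels, and
\[
L_{\mathrm{GPL},r}(\boldsymbol{\beta})=\frac{\lambda_{i_r}}{\sum_{j\in R_r}\lambda_j}\,\{1+O(e^{\alpha})\}.
\]
Taking the product over the $R$ event times gives $L_{\mathrm{GPL}}\to L_{\mathrm{PL}}$ pointwise in $\boldsymbol{\beta}$. The PL-Cox likelihood here is understood without an intercept, or equivalently with any intercept, since PL-Cox is shift invariant.

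The only delicate point is to make precise what ``approaches'' means. The individual factors $\theta_{i_r}$ and $1-\prod(1-\theta_i)$ both tend to $0$, so one must state the result as convergence of the ratio rather than of numerator and denominator separately. The $e^{\alpha}$ cancellation is exactly where the single-failure assumption enters. If $d_r>1$, the numerator would be of order $e^{d_r\alpha}$ against a denominator of order $e^{\alpha}$, so the contribution would vanish like $e^{(d_r-1)\alpha}$ and the limit would not match the Breslow form.

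I will also note that convergence is uniform on compact sets of $\boldsymbol{\beta}$, because the $O(e^{\alpha})$ remainders are bounded by constants depending only on $\max_i\lambda_i$. This is worth recording if one wants the posteriors, and not just the likelihoods, to be close. The statement itself only requires the pointwise limit.
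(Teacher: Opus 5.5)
Your proof is correct and has the same skeleton as the paper's. Both establish $\theta_i\sim e^{\alpha}\exp(\boldsymbol{x}_i^{\ast\top}\boldsymbol{\beta})$ and then argue risk set by risk set. The difference lies in the key step.

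The paper stops at $\theta_i/\{e^{\alpha}\lambda_i\}\to1$. It then cites Henderson's limiting result, which is stated for the exact parameterization $\theta_i=\lambda_i\theta_1$ with $\theta_1\to0$ and for rankings without ties. It asserts that this result transfers to each risk set.

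You instead compute the limit of $L_{\mathrm{GPL},r}$ directly. You expand the numerator as $e^{\alpha}\lambda_{i_r}\{1+O(e^{\alpha})\}$ and the denominator as $e^{\alpha}\sum_{j\in R_r}\lambda_j\{1+O(e^{\alpha})\}$, and the factor $e^{\alpha}$ cancels. Your route buys several things:
\begin{itemize}
\item It is self-contained.
\item It handles the fact that, under the logistic link, the $\theta_i$ are only asymptotically proportional to $\lambda_i$, not exactly so. The paper's appeal to Henderson leaves this implicit.
\item It works with the top-$1$ partial-ranking form of $L_{\mathrm{GPL},r}$ actually used in the Cox construction, rather than Henderson's full-ranking setting.
\item It shows precisely where $d_r=1$ is needed, through the $e^{(d_r-1)\alpha}$ mismatch.
\item It gives an explicit $O(e^{\alpha})$ rate and uniformity on compact sets of $\boldsymbol{\beta}$.
\end{itemize}
The paper's route is shorter, and it presents the result as a specialization of a known limit from the ranking literature.
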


\begin{proof}
As $\alpha\to -\infty$,
\[
\theta_i
=\frac{\exp(\alpha+\boldsymbol{x}_i^{\ast \top}\boldsymbol{\beta})}{1+\exp(\alpha+\boldsymbol{x}_i^{\ast \top}\boldsymbol{\beta})}
\sim \exp(\alpha)\exp(\boldsymbol{x}_i^{\ast \top}\boldsymbol{\beta}),
\]
because
\[
\frac{
\theta_i
}{
\exp(\alpha)\exp(\boldsymbol{x}_i^{\ast \top}\boldsymbol{\beta})
}
=
\frac{1}{1+\exp(\alpha+\boldsymbol{x}_i^{\ast \top}\boldsymbol{\beta})}
\to 1 .
\]
In this limit, the success probabilities become proportional to $\exp(\boldsymbol{x}_i^{\ast \top}\boldsymbol{\beta})$ with a common vanishing scale factor.
Under Henderson's limiting result for the GPL model, the GPL likelihood approaches the PL likelihood.
Applying this argument to each risk set in the Cox construction yields the PL-Cox likelihood.
\end{proof}

This result highlights the role of the intercept in the GPL-Cox model.
Under the logistic link, the intercept controls the overall scale of the success probabilities and prevalence of tied events.
As $\alpha \to -\infty$, all success probabilities become small, and the model retains only relative differences through $\boldsymbol{x}_i^{\ast \top} \boldsymbol{\beta}$, leading to the PL-Cox formulation.

\section{Simulation study\label{simulation}}
\subsection{Settings}
We considered several survival time generating mechanisms, including continuous and discrete survival time models.
The sample size was set to $n=300$.
The covariates consisted of four independent variables generated from the standard normal distribution.
The regression coefficients were set to $\boldsymbol{\beta}^\top_{\text{true}}=(0.10,0.05,-0.15,0.30)$.
Right censoring was introduced independently of the failure time.
For continuous survival time scenarios, censoring times were generated from $\mathrm{Unif}(0.5,30)$.
To induce tied event times, the observed survival times were coarsened using prespecified rounding units of 0.01, 0.1, 0.5, where larger values correspond to coarser discretization of the time scale.
For discrete survival time scenarios, right censoring times were independently generated from a discrete uniform distribution over $\{1,\dots,T_{\max}\}$.
To induce tied event times, several coarsening levels were considered, corresponding to observation intervals of 1, 7, 14, and 28 time units.

We compared the PL-Cox and GPL-Cox models with several existing approaches, including the Breslow, Efron, and Exact methods for handling ties in the frequentist Cox model, along with the methods of \citet{renCoxPolyaGammaAlgorithmFlexible2025} (Ren) and \citet{tamanoLocationScaleCalibrationGeneralized2025} (Tamano).
The implementation details for these methods are described in Appendix C.1 in the Supplementary Material.
For the PL-Cox and GPL-Cox models, independent normal priors $\mathrm{N}(0,100)$ were assigned to the regression coefficients.
In the PL-Cox implementation, the approximation parameter for the negative binomial representation was fixed at $\delta=10$.
Posterior inference was carried out using Markov chain Monte Carlo with 3,000 iterations, and the first 1,000 iterations were discarded as burn-in.
Each scenario was replicated 10,000 times.
The performance of the competing methods was evaluated using the empirical bias (Bias), the Monte Carlo standard deviation (SD) of the point estimates, root mean squared error (RMSE), coverage probability (CP) of the nominal 95\% interval, and average interval width (AW).
Further details and all the results are provided in Appendix C in the Supplementary Material.
Additional experiments under non-proportional hazards models were also conducted; the detailed settings and results are reported in Appendix C in the Supplementary Material. 

\subsection{Results}

Table \ref{res_cont} shows the results for $\beta_4$ under continuous survival times generated from an exponential model.
Across all four scenarios, the PL-Cox method yielded the lowest RMSE, although it exhibited a slightly larger AW, resulting in a CP that was mildly inflated relative to the nominal level.
In the scenario with no rounding, the GPL-Cox method exhibited a shorter AW, resulting in a CP below the nominal level.
This is because, as shown in Equation (\ref{like_GPL}), the numerator of the likelihood includes the product of $1-\theta_i$ for subjects who did not experience the event.
As a result, information from subjects without events was incorporated into the posterior distribution, leading to shorter credible intervals and contributing to the reduction in CP.
This pattern attenuated as the frequency of ties increased, and in the scenario where rounding unit was 0.1, its performance was comparable to that of the other methods.
Across all four scenarios, the Ren method yielded a slightly higher RMSE, and the Tamano method consistently produced results comparable to those obtained from the three frequentist approaches.

\begin{table}[htbp]
\centering
\caption{Simulation results for the regression coefficient $\beta_4$ under continuous survival times generated from an exponential model ($n=300$). The rounding levels indicate the coarsening width applied to the observed times, where ``None'' corresponds to no rounding. Reported metrics are empirical bias (Bias), the Monte Carlo standard deviation (SD) of the point estimates, root mean squared error (RMSE), coverage probability of the 95\% interval (CP), and average interval width (AW).\label{res_cont}}
\begin{tabular}{ccrcccc}
\hline
Sce   & Method  & \multicolumn{1}{c}{Bias}   & SD    & RMSE  & CP    & AW    \\
\hline
None & Breslow & 0.005  & 0.074 & 0.075 & 94.87 & 0.286 \\
     & Efron   & 0.005  & 0.074 & 0.075 & 94.87 & 0.286 \\
     & Exact   & 0.005  & 0.074 & 0.075 & 94.87 & 0.286 \\
     & PL-Cox  & $-$0.009 & 0.071 & 0.071 & 96.58 & 0.303 \\
     & GPL-Cox & $-$0.011 & 0.075 & 0.076 & 88.89 & 0.253 \\
     & Ren     & 0.012  & 0.076 & 0.077 & 93.61 & 0.284 \\
     & Tamano  & 0.005  & 0.074 & 0.075 & 94.79 & 0.289 \\
\hline
0.01 & Breslow & 0.005  & 0.074 & 0.074 & 94.89 & 0.286 \\
     & Efron   & 0.005  & 0.074 & 0.075 & 94.87 & 0.286 \\
     & Exact   & 0.005  & 0.074 & 0.075 & 94.89 & 0.286 \\
     & PL-Cox  & $-$0.009 & 0.071 & 0.071 & 96.60 & 0.303 \\
     & GPL-Cox & 0.002  & 0.077 & 0.077 & 91.35 & 0.270 \\
     & Ren     & 0.012  & 0.076 & 0.077 & 93.72 & 0.284 \\
     & Tamano  & 0.005  & 0.074 & 0.074 & 94.97 & 0.289 \\
\hline
0.1  & Breslow & 0.003  & 0.074 & 0.074 & 94.97 & 0.286 \\
     & Efron   & 0.005  & 0.074 & 0.075 & 94.87 & 0.286 \\
     & Exact   & 0.006  & 0.075 & 0.075 & 94.80 & 0.288 \\
     & PL-Cox  & $-$0.010 & 0.071 & 0.071 & 96.50 & 0.303 \\
     & GPL-Cox & 0.005  & 0.075 & 0.075 & 94.37 & 0.284 \\
     & Ren     & 0.013  & 0.077 & 0.078 & 93.51 & 0.284 \\
     & Tamano  & 0.003  & 0.074 & 0.074 & 94.95 & 0.288 \\
\hline
0.5  & Breslow & $-$0.004 & 0.072 & 0.072 & 95.42 & 0.286 \\
     & Efron   & 0.005  & 0.074 & 0.074 & 94.93 & 0.286 \\
     & Exact   & 0.013  & 0.077 & 0.078 & 94.82 & 0.295 \\
     & PL-Cox  & $-$0.015 & 0.069 & 0.071 & 96.57 & 0.303 \\
     & GPL-Cox & 0.006  & 0.074 & 0.074 & 95.06 & 0.290 \\
     & Ren     & 0.015  & 0.077 & 0.079 & 93.26 & 0.284 \\
     & Tamano  & $-$0.004 & 0.072 & 0.072 & 95.21 & 0.288 \\
\hline
\end{tabular}
\end{table}

Table \ref{res_discrete} shows the results for $\beta_4$ under discrete survival times generated from a logistic hazard model with a constant hazard.
When the coarsening unit was 1, the PL-Cox method exhibited the largest AW,  resulting in a CP that was mildly inflated relative to the nominal level.
As the coarsening unit size increased, the PL-Cox method exhibited a negative bias.
When the coarsening unit size was 28, the point estimation performance of the PL-Cox method was similar to that of the Breslow and Tamano methods.
Across all four scenarios, the GPL-Cox method yielded results similar to those of the Exact method.

\begin{table}[htbp]
\centering
\caption{Simulation results for the regression coefficient $\beta_4$ under discrete survival times generated from a logistic hazard model with constant hazard ($n=300$). The coarsening levels indicate the observation grid width applied to the event and censoring times. Reported metrics are empirical bias (Bias), the Monte Carlo standard deviation (SD) of the point estimates, root mean squared error (RMSE), coverage probability of the 95\% interval (CP), and average interval width (AW).\label{res_discrete}}
\begin{tabular}{ccrcccc}
\hline
Sce   & Method  & \multicolumn{1}{c}{Bias}   & SD    & RMSE  & CP    & AW    \\
\hline
1   & Breslow & 0.001  & 0.080 & 0.080 & 95.02 & 0.313 \\
    & Efron   & 0.002  & 0.081 & 0.081 & 94.92 & 0.313 \\
    & Exact   & 0.004  & 0.081 & 0.081 & 94.89 & 0.314 \\
    & PL-Cox  & 0.001  & 0.080 & 0.080 & 96.14 & 0.331 \\
    & GPL-Cox & 0.004  & 0.081 & 0.081 & 93.94 & 0.309 \\
    & Ren     & 0.009  & 0.083 & 0.083 & 93.47 & 0.309 \\
    & Tamano  & 0.001  & 0.080 & 0.080 & 94.77 & 0.314 \\
\hline
7   & Breslow & $-$0.004 & 0.079 & 0.079 & 95.35 & 0.315 \\
    & Efron   & 0.004  & 0.081 & 0.081 & 94.78 & 0.315 \\
    & Exact   & 0.012  & 0.084 & 0.085 & 94.57 & 0.325 \\
    & PL-Cox  & $-$0.002 & 0.079 & 0.079 & 96.21 & 0.333 \\
    & GPL-Cox & 0.013  & 0.084 & 0.084 & 94.23 & 0.322 \\
    & Ren     & 0.013  & 0.084 & 0.085 & 93.03 & 0.313 \\
    & Tamano  & $-$0.004 & 0.079 & 0.079 & 95.21 & 0.317 \\
\hline
14  & Breslow & $-$0.012 & 0.077 & 0.078 & 95.87 & 0.317 \\
    & Efron   & 0.004  & 0.081 & 0.081 & 94.91 & 0.318 \\
    & Exact   & 0.021  & 0.086 & 0.089 & 94.79 & 0.337 \\
    & PL-Cox  & $-$0.008 & 0.078 & 0.078 & 96.80 & 0.334 \\
    & GPL-Cox & 0.021  & 0.086 & 0.088 & 94.36 & 0.335 \\
    & Ren     & 0.016  & 0.086 & 0.087 & 92.99 & 0.316 \\
    & Tamano  & $-$0.012 & 0.077 & 0.078 & 95.76 & 0.318 \\
\hline
28  & Breslow & $-$0.029 & 0.074 & 0.080 & 95.63 & 0.322 \\
    & Efron   & 0.001  & 0.083 & 0.083 & 95.07 & 0.324 \\
    & Exact   & 0.036  & 0.094 & 0.101 & 93.61 & 0.364 \\
    & PL-Cox  & $-$0.022 & 0.077 & 0.080 & 96.60 & 0.339 \\
    & GPL-Cox & 0.038  & 0.094 & 0.102 & 93.00 & 0.362 \\
    & Ren     & 0.019  & 0.090 & 0.092 & 92.40 & 0.323 \\
    & Tamano  & $-$0.029 & 0.074 & 0.080 & 95.62 & 0.322 \\
\hline
\end{tabular}
\end{table}

The results for the other scenarios are provided in the Appendix C.2 in the Supplementary Material.
The PL-Cox method maintained stable performance across continuous survival time scenarios beyond the constant hazard setting.
In discrete survival time scenarios with non-constant hazards, the PL-Cox method exhibited bias, with behavior comparable to that of the Breslow approximation.
Under non-proportional hazards, systematic bias was observed in certain settings, particularly when early events dominated; this pattern was consistent with that in the Breslow approximation.
The GPL-Cox method sometimes led to increased bias and SD in continuous survival time scenarios beyond the constant hazard setting.
Even in discrete survival time scenarios, when the hazard was not constant, its behavior deviated from that of the Exact method.
Under non-proportional hazards, the bias tended to be larger when the number of ties was small.

\section{Applications\label{application}}

Subsection \ref{Vital} presents an analysis illustrating the scalability of the proposed methods, while Subsection \ref{readmission} presents an extension of the methods to a shared frailty model.
Furthermore, Appendix D.3 in the Supplementary Material presents an analysis of real data characterized by a high frequency of ties.

\subsection{Vital Data\label{Vital}}
We analyzed data from the VITamin D and OmegA-3 TriaL (VITAL), a large randomized primary-prevention trial of vitamin $\text{D}_3$ and marine n-3 fatty acids \citep{mansonVitaminSupplementsPrevention2019, macfarlaneEffectsVitaminMarine2020}.
This dataset is available from Project Data Sphere (\url{https://www.projectdatasphere.org/}), an open-source repository of individual-level patient data.
We focused on the time to incident cancer as the outcome and used baseline treatment assignments and selected baseline covariates to fit the PL-Cox, GPL-Cox, and Tamano methods.
Posterior inference was carried out using Markov chain Monte Carlo with 5,000 iterations, and the first 1,000 iterations were discarded as burn-in.
The sample size was 25,871, with 1,617 observed events, providing a practically relevant large-scale setting for assessing the scalability of the proposed methods.
To further examine how the methods behave as the degree of tied event times increases, we analyzed both the original follow-up times and coarsened versions obtained by rounding the observed times to prespecified units.
In the original data, there were 1,059 distinct event times and the maximum tie block size was 6; after rounding to 0.05, 0.10, and 0.25 years, the number of distinct event times decreased to 120, 60, and 25, respectively, while the maximum tie block size increased to 27, 48, and 95.

Table \ref{vital_computation} shows the computational performance of the PL-Cox, GPL-Cox, and Tamano methods for the original and coarsened versions of the VITAL dataset.
GPL-Cox was substantially faster than PL-Cox across all scenarios, whereas Tamano’s method was much more computationally demanding in the original VITAL data. 
Because the Tamano method required substantial computation time, we applied it only to the original dataset and did not repeat it for all coarsened scenarios.
These results suggest that GPL-Cox is computationally more scalable than PL-Cox and the Tamano method in this real-data setting, particularly when tied event times become more pronounced.
\begin{table}[htbp]
\centering
\caption{Computational performance of the PL-Cox, GPL-Cox, and Tamano methods on the original and coarsened versions of the VITAL dataset, summarized by computation time (minutes) and median effective sample size per second (ESS/sec). Coarsened datasets were created by rounding the follow-up times to the indicated units.\label{vital_computation}}
\begin{tabular}{lrrrr}
\hline
Scenario                 & Original  & 0.05  & 0.1   & 0.25  \\
\hline
\multicolumn{2}{l}{Time (min)}       &       &       &       \\
PL-Cox                   & 74.4      & 67.1  & 66.5  & 66.1  \\
GPL-Cox                  & 11.9      & 5.8   & 5.7   & 7.0   \\
Tamano                   & 1695.3    & -     & -     & -     \\
\hline
\multicolumn{2}{l}{Median ESS/sec} &       &       &       \\
PL-Cox                 & 0.030     & 0.032 & 0.032 & 0.037 \\
GPL-Cox                & 0.003     & 0.066 & 0.092 & 0.144 \\
Tamano                   & 0.033     & -     & -     & -    \\
\hline
\end{tabular}
\begin{tablenotes}
    \item Because the Tamano method required substantial computation time, we applied it only to the original dataset and did not repeat it for all coarsened scenarios.
\end{tablenotes}
\end{table}

Figure \ref{vital_forest} presents the posterior mean of the hazard ratio and 95\% credible intervals for a selected subset of covariates, namely vitamin $\text{D}_3$ assignment, marine n-3 fatty acid assignment, age, sex, and current smoking status, across the original and coarsened versions of the VITAL data. 
The estimated hazard ratios were similar across all methods and coarsening levels, with only minor differences.
In the original dataset, the credible intervals for the GPL-Cox model were slightly shorter.
This result is consistent with the findings from the simulation study.
These results suggest that the posterior summaries from the proposed methods remained reasonably stable even when the follow-up times were progressively coarsened.
For completeness, forest plots for all regression coefficients are provided in the Supplementary Materials.
\begin{figure}[htbp]
	\centering
	\includegraphics[width=1.0\linewidth]{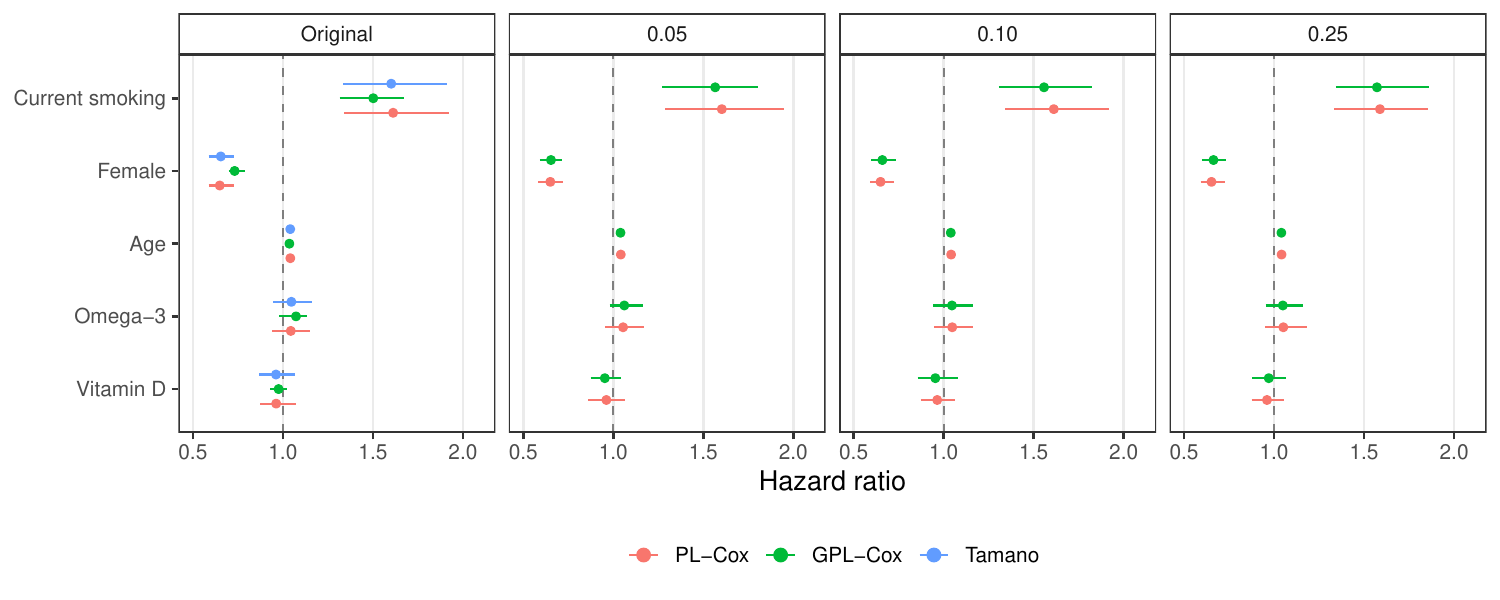}
	\caption{Forest plot for hazard ratios for a selected subset of covariates on the original and coarsened versions of the VITAL dataset.\label{vital_forest}}
\end{figure}

The deviance information criterion (DIC) \citep{spiegelhalterBayesianMeasuresModel2002} was computed for the PL-Cox and GPL-Cox methods, with values of 29,875.3, 29,885.6, 29,896.4, and 29,922.5 for PL-Cox and 31,108.7, 25,125.3, 23,066.2, and 20,391.8 for GPL-Cox in the original, 0.05, 0.10, and 0.25 year coarsened datasets, respectively.
Although PL-Cox yielded a smaller DIC in the original data, GPL-Cox showed substantially smaller DIC values in all coarsened datasets, with the difference becoming more pronounced as the degree of tied event times increased.

\subsection{Readmissions Data\label{readmission}}

We analyzed the colorectal cancer readmissions data from the \texttt{frailtypack} package \citep{rondeauFrailtypackPackageAnalysis2012}, which consisted of $403$ patients and $861$ readmissions.
The covariates considered in this study included chemotherapy treatment, sex, Dukes stage, and the Charlson comorbidity index.
To account for within-subject dependence, we considered a shared frailty Cox model with $u_i \sim \mathrm{N}(0,\sigma_u^2)$.

We fitted the PL-Cox and GPL-Cox methods and compared them with the method of \citet{renCoxPolyaGammaAlgorithmFlexible2025} (Ren).
Posterior inference was carried out using Markov chain Monte Carlo with 2,000 iterations, and the first 1,000 iterations were discarded as burn-in.
Table~\ref{tab:readmission_estimates} shows posterior summaries of the regression coefficients and frailty variance.
The results obtained using each method were broadly consistent.
The computational efficiency, measured in ESS/sec, is also shown in Table~\ref{tab:readmission_estimates}.
The PL-Cox and GPL-Cox methods were substantially more efficient than the Ren method.
Caterpillar plots of subject-specific frailties are provided in Appendix D.2 in the Supplementary Material.

The DIC was computed for the PL-Cox and GPL-Cox methods, with values of 5{,}498.9 and 5{,}663.1, respectively.
This result suggests that the PL-Cox model performs better in this dataset, where tied events are relatively sparse and the additional flexibility of the GPL-Cox formulation does not lead to improved model fit.

\begin{table}[t]
\centering
\caption{Posterior summaries of hazard ratios and frailty variance for the readmissions data. Values are posterior means with 95\% credible intervals in parentheses. Time indicates the computation time (in seconds) for each method. ESS/sec denotes the effective sample size per second.}
\label{tab:readmission_estimates}
\begin{tabular}{lccc}
\hline
                     & PL-Cox                & GPL-Cox               & Ren                   \\
\hline
Chemo (treated)         & 0.81 {[}0.60, 1.05{]} & 0.88 {[}0.65, 1.19{]} & 0.84 {[}0.63, 1.12{]} \\
Sex (female)            & 0.63 {[}0.47, 0.83{]} & 0.56 {[}0.42, 0.76{]} & 0.62 {[}0.47, 0.80{]} \\
Dukes C               & 1.36 {[}1.02, 1.88{]} & 1.44 {[}0.98, 2.09{]} & 1.33 {[}0.96, 1.83{]} \\
Dukes D               & 2.81 {[}1.93, 4.06{]} & 3.97 {[}2.66, 6.34{]} & 2.92 {[}2.05, 4.23{]} \\
Charlson (1--2)          & 1.54 {[}0.94, 2.54{]} & 1.65 {[}0.98, 2.92{]} & 1.44 {[}0.86, 2.34{]} \\
Charlson ($\ge3$)        & 1.37 {[}1.02, 1.80{]} & 1.59 {[}1.23, 2.08{]} & 1.40 {[}1.07, 1.80{]} \\
Frailty variance        & 0.48 {[}0.24, 0.79{]} & 0.90 {[}0.67, 1.17{]} & 0.52 {[}0.30, 0.82{]} \\
\hline
Time (sec)         & 76.53                 & 30.83                 & 1746.74               \\
Median ESS/sec ($\beta$) & 1.818                 & 0.797                 & 0.087                 \\
ESS/sec (frailty)        & 0.068                 & 0.136                 & 0.026                 \\
\hline
\end{tabular}
\end{table}

\section{Discussion\label{discussion}}
In this study, we proposed two Gibbs sampling algorithms for Bayesian inference in the Cox model based on PL and GPL rank-ordered likelihood formulations.
We have developed an R package, \texttt{BayesPLCox}, for implementing the PL-Cox and GPL-Cox methods, which is publicly available on GitHub at \url{https://github.com/tom-ohigashi/BayesPLCox}.
Across many scenarios in the simulation study, the PL-Cox model demonstrated stable performance, suggesting that it is a reasonable first choice.
However, the GPL-Cox model may be more appropriate for large-scale datasets with substantial tied event times, such as the VITAL data in Subsection \ref{Vital} and the RHC data in Appendix D.1 in the Supplementary Material.
An advantage of the GPL-Cox model is that it allows posterior sampling to be performed with ease, even in cases where the frequentist Exact method fails to provide estimates.
A limitation of the GPL-Cox model is that its performance may deteriorate in settings with sparse ties, where the additional information incorporated through the likelihood may lead to over-concentration of the posterior.
Considering its potential extensions to frailty models and computational efficiency, we believe that the two proposed algorithms are useful.

From the original likelihood perspective, an intercept term is not required, as in the case of the partial likelihood; however, we included an intercept in the implementation of the PL-Cox model.
This is because, in the negative binomial approximation used for Gibbs sampling described in Subsection \ref{post_pl}, additional terms $\log(\zeta_i/\delta)$ related to the approximation appeared on the scale of the linear predictor, making the inclusion of an intercept practically useful.
In several scenarios, we compared performance with and without the intercept (results not shown).
The results indicated that, although performance did not necessarily deteriorate without it, including the intercept resulted in lower RMSE and AW.

A limitation of this study is that the performance of both methods deteriorates under more complex baseline hazard structures (see the Appendix C in the Supplementary Material).
In particular, the GPL-Cox model frequently exhibited undercoverage.
This may be due to the assumption that the intercept remains constant over time.
Addressing this issue and developing more stable methods remain topics for future research.

%---------------------------------------%
%          Acknowledgement              %
%---------------------------------------%
\section*{Acknowledgement}
This work was partially supported by JSPS KAKENHI Grant Numbers 24K20739, 24K21420, 25H00546, and 25K21166.
RHC data obtained from \url{http://hbiostat.org/data} courtesy of the Vanderbilt University Department of Biostatistics.
This study was carried out under the Cooperative Use Registration (2025-ISMCRP-0001).

%---------------------------------------%
%       Data Availability Statement     %
%---------------------------------------%
\section*{Data Availability Statement}
The R package \texttt{BayesPLCox}, which implements the proposed methods, is publicly available on GitHub at \url{https://github.com/tom-ohigashi/BayesPLCox}.

%---------------------------------------------%
%                Reference                    %
%---------------------------------------------%

\newpage
% \pdfoutput=1
% \documentclass[11pt]{article}
% \usepackage[utf8]{inputenc}
% \usepackage{amsmath,amssymb,bbm}
% \usepackage{amsthm}
% \usepackage{authblk}
% \usepackage{cprotect}

% \usepackage{stmaryrd}
% \usepackage{colortbl,xcolor}

% \usepackage{subcaption}

% \usepackage{natbib}
% \usepackage{booktabs, threeparttable}

% \usepackage{geometry}
% \usepackage{lipsum}

% \usepackage{color}
% \usepackage{soul}
% \usepackage{url}
% % One column, one-inch margins
% \geometry{letterpaper, margin=1in}

% % Double-spaced throughout
% \usepackage{setspace}
% \doublespacing

% \usepackage[figuresright]{rotating}
% \usepackage{ dsfont }

% \usepackage{here}

% \newtheorem{theorem}{Theorem}

%%%%%%%%%% Merge with supplemental materials %%%%%%%%%%
%%%%%%%%%% Prefix a "S" to all equations, figures, tables and reset the counter %%%%%%%%%%
\setcounter{equation}{0}
\setcounter{figure}{0}
\setcounter{table}{0}
\setcounter{page}{1}
\setcounter{section}{0}
\makeatletter

\renewcommand{\thesection}{Appendix \Alph{section}}
\renewcommand{\thesubsection}{\Alph{section}.\arabic{subsection}}
\renewcommand{\theequation}{\Alph{section}.\arabic{equation}}

\renewcommand{\thetable}{\arabic{table}}
\renewcommand{\tablename}{Table}

\renewcommand{\thefigure}{\arabic{figure}}
\renewcommand{\figurename}{Figure}

\renewcommand{\bibnumfmt}[1]{[S#1]}
\renewcommand{\citenumfont}[1]{S#1}
%%%%%%%%%% Prefix a "S" to all equations, figures, tables and reset the counter %%%%%%%%%%

% \newtheorem{defi}{Definition}
% \newtheorem{theo}{Theorem}
% \newtheorem{assu}{Assumption}
% \newtheorem{esti}{Estimation}
% \newtheorem{lemm}{Lemma}
% \newtheorem{prop}{Proposition}
% \newtheorem{note}{Note}

% \allowdisplaybreaks[4]

% \newcommand{\indep}{\mathop{\perp\!\!\!\perp}}
% \newcommand{\notindep}{\mathop{\not\!\perp\!\!\!\perp}}
% \newcommand{\bld}{\boldsymbol}

% \newcommand{\argmin}{\mathop{\rm arg~min}\limits}
% \newcommand{\argmax}{\mathop{\rm arg~max}\limits}

% \usepackage[top=1in,bottom=1in,left=1in,right=1in]{geometry}

% \begin{document}
% \title{Supplementary Materials for a Full-Likelihood Framework for Bayesian Inference in the Cox Model via Rank-Ordered Data Modeling by Tomohiro Ohigashi, Shunichiro Orihara and Shonosuke Sugasawa.}

% \author{\empty}
% \date{\empty}
% \date{\empty}
% \maketitle
\clearpage
\begin{center}
{\Large \bfseries Supplementary Materials for Efficient Bayesian Inference in the Cox Model via Rank-Ordered Likelihood by Tomohiro Ohigashi, Shunichiro Orihara and Shonosuke Sugasawa.\par}
\end{center}
\vspace{1em}

% \author{\empty}
% \date{\empty}
% \date{\empty}
% \maketitle

\section{Step-by-step sampling procedures for PL-Cox}

For posterior computation under the PL-Cox model, we use the latent-variable representation introduced in the main text.
Let \(R\) be the number of distinct event times, let \(R_r\) denote the risk set at time \(t_{(r)}\), and let \(E_r\) denote the corresponding event set, with \(d_r = |E_r|\).
We define $\lambda_i = \exp(\eta_i)$, where $\eta_i = \bld{x}_i^\top \bld{\beta}$, and the design vector \(\bld{x}_i\) may include an intercept term.
We assume the prior $\bld{\beta} \sim \mathrm{N}(\bld{b}_0, V_0)$.

Using the gamma integral identity, the augmented likelihood can be written as
\[
L_{\mathrm{PL}}^{\ast}(\bld{\beta},\bld{Z})
\propto
\prod_{r=1}^{R}
\left[
z_r^{d_r-1}
\prod_{i \in E_r} \lambda_i
\exp\!\left\{
-z_r \sum_{j \in R_r} \lambda_j
\right\}
\right].
\]
Equivalently, defining
\[
c_i = \sum_{r=1}^{R} \mathds{1}(i \in E_r),
\qquad
\zeta_i = \sum_{r=1}^{R} \mathds{1}(i \in R_r)\, z_r,
\]
we obtain
\[
L_{\mathrm{PL}}^{\ast}(\bld{\beta},\bld{Z})
\propto
\prod_{i=1}^{n}
\lambda_i^{c_i}\exp(-\zeta_i \lambda_i)
\prod_{r=1}^{R} z_r^{d_r-1}.
\]

To obtain conditionally Gaussian updates for \(\bld{\beta}\), we employ the Poisson--Gamma construction described in the main text.
Let \(\delta>0\) be a fixed approximation parameter and define
\[
\psi_i = \bld{x}_i^\top \bld{\beta} + \log(\zeta_i/\delta),
\qquad
\kappa_i = \frac{c_i-\delta}{2},
\qquad
\bld{o} = \bigl(\log(\zeta_1/\delta),\ldots,\log(\zeta_n/\delta)\bigr)^\top .
\]

The full conditional distributions are described as follows:

\begin{itemize}
\item \textbf{(Sampling of \(Z_r\))} For \(r=1,\ldots,R\),
\[
Z_r \mid \bld{\beta}, \bld{D}
\sim
\mathrm{Gamma}\!\left(d_r,\sum_{j \in R_r}\lambda_j\right).
\]

\item \textbf{(Sampling of \(\omega_i\))} For \(i=1,\ldots,n\),
\[
\omega_i \mid \bld{\beta}, \bld{Z}, \bld{D}
\sim
\mathrm{PG}(c_i+\delta,\psi_i).
\]

\item \textbf{(Sampling of \(\bld{\beta}\))} Let
\[
\Omega = \mathrm{diag}(\omega_1,\ldots,\omega_n),
\qquad
B = X^\top \Omega X + V_0^{-1},
\]
\[
\bld{g}
=
X^\top(\bld{\kappa}-\Omega \bld{o}) + V_0^{-1}\bld{b}_0,
\qquad
\bld{\kappa}=(\kappa_1,\ldots,\kappa_n)^\top,
\]
where \(X\) is the design matrix with \(i\)th row \(\bld{x}_i^\top\).
Then
\[
\bld{\beta} \mid \bld{\omega}, \bld{Z}, \bld{D}
\sim
\mathrm{N}(B^{-1}\bld{g},\,B^{-1}).
\]
\end{itemize}

\newpage

\section{Step-by-step sampling procedures for GPL-Cox}

For posterior computation under the GPL-Cox model, we use the geometric latent-variable representation.
Let \(R\) be the number of distinct event times, let \(R_r\) denote the risk set at time \(t_{(r)}\), and let \(E_r\) denote the corresponding event set.
We write $\theta_i = \operatorname{expit}(\eta_i)$ where $\eta_i = \bld{x}_i^\top \bld{\beta}$, and the design vector \(\bld{x}_i\) may include an intercept term.
We assume the prior $\bld{\beta} \sim \mathrm{N}(\bld{b}_0, V_0)$.

For each distinct event time \(t_{(r)}\), introduce a latent geometric variable
\[
Z_r \mid \bld{\beta}, \bld{D}
\sim
\mathrm{Geom}\!\left(
1-\prod_{j \in R_r}(1-\theta_j)
\right),
\qquad r=1,\ldots,R,
\]
where we use the convention that \(\mathrm{Geom}(p)\) has support \(\{1,2,\ldots\}\).

Define
\[
c_i = \sum_{r=1}^{R} \mathds{1}(i \in E_r),
\qquad
\zeta_i = \sum_{r=1}^{R} \mathds{1}(i \in R_r)\, Z_r,
\qquad
\kappa_i = c_i - \frac{\zeta_i}{2}.
\]
Under this augmentation, the complete-data likelihood has kernel
\[
L_{\mathrm{GPL}}^{\ast}(\bld{\beta},\bld{Z})
\propto
\prod_{i=1}^{n}
\theta_i^{c_i}(1-\theta_i)^{\zeta_i-c_i}.
\]
Using \(\theta_i=\exp(\eta_i)/(1+\exp(\eta_i))\), this can be rewritten as
\[
L_{\mathrm{GPL}}^{\ast}(\bld{\beta},\bld{Z})
\propto
\prod_{i=1}^{n}
\exp(\kappa_i \eta_i)
\int_0^\infty
\exp\!\left(-\frac{\omega_i \eta_i^2}{2}\right)
p_{\mathrm{PG}}(\omega_i \mid \zeta_i,0)\,d\omega_i,
\]
which yields conditionally Gaussian updates for \(\bld{\beta}\).

The full conditional distributions are described as follows:

\begin{itemize}
\item \textbf{(Sampling of \(Z_r\))} For \(r=1,\ldots,R\),
\[
Z_r \mid \bld{\beta}, \bld{D}
\sim
\mathrm{Geom}\!\left(
1-\prod_{j \in R_r}(1-\theta_j)
\right).
\]

\item \textbf{(Sampling of \(\omega_i\))} For \(i=1,\ldots,n\),
\[
\omega_i \mid \bld{\beta}, \bld{Z}, \bld{D}
\sim
\mathrm{PG}(\zeta_i,\eta_i).
\]
If \(\zeta_i=0\), then \(\omega_i\) is degenerate at \(0\).

\item \textbf{(Sampling of \(\bld{\beta}\))} Let
\[
\Omega = \mathrm{diag}(\omega_1,\ldots,\omega_n),
\qquad
B = X^\top \Omega X + V_0^{-1},
\]
\[
\bld{b}
=
X^\top \bld{\kappa} + V_0^{-1}\bld{b}_0,
\qquad
\bld{\kappa}=(\kappa_1,\ldots,\kappa_n)^\top,
\]
where \(X\) is the design matrix with \(i\)th row \(\bld{x}_i^\top\).
Then
\[
\bld{\beta} \mid \bld{\omega}, \bld{Z}, \bld{D}
\sim
\mathrm{N}(B^{-1}\bld{b},\,B^{-1}).
\]
\end{itemize}

\newpage

\section{Detailed simulation settings and full results}
\subsection{Simulation settings}
\paragraph{Continuous survival time scenarios.}
We generated survival data from a proportional hazards model with a Weibull baseline hazard.
For each simulated dataset, the covariates consisted of four independent variables generated from the standard normal distribution.
The regression coefficients were set to $\bld{\beta}^\top_{\text{true}}=(0.10,0.05,-0.15,0.30)$.
Right censoring was introduced independently of the failure time.

Event times $T_i^\ast$ were generated from a Weibull distribution with shape parameter $a$ and scale parameter $b_i$,
\[
T_i^\ast \sim \mathrm{Weibull}(a, b_i), \quad 
b_i = b \exp\left(- \frac{\boldsymbol{x}_i^\top \boldsymbol{\beta}}{a}\right),
\]
so that the proportional hazards structure is satisfied.
Three baseline hazard scenarios were considered:
\begin{itemize}
\item Scenario 1: $a = 1.0$, $b = 10$ (constant hazard),
\item Scenario 2: $a = 0.7$, $b = 8.0$ (decreasing hazard),
\item Scenario 3: $a = 2.0$, $b = 12.0$ (increasing hazard).
\end{itemize}
Independent censoring times were generated as $C_i \sim \mathrm{Unif}(0.5, 30)$, and the observed data were defined as $T_i = \min(T_i^\ast, C_i), \delta_i = \mathds{1}(T_i^\ast \le C_i)$.

To investigate the effect of tied event times, the observed survival times were coarsened using a grid of width $\Delta$.
Specifically, for $\Delta > 0$, the observed time was defined as $T_i^{(\Delta)} = \Delta \cdot \mathrm{round}\left( \frac{T_i}{\Delta} \right)$.
When $\Delta = 0$, no coarsening was applied and the data correspond to continuous event times.
We considered the following coarsening levels: $\Delta \in \{0 \ (\text{no rounding}),\ 0.01,\ 0.1\,\ 0.5$.

\paragraph{Discrete survival time scenarios.}
We generated survival data from a discrete-time proportional hazards model based on a logistic hazard formulation.
For each simulated dataset, the covariates consisted of four independent variables generated from the standard normal distribution.
The regression coefficients were set to $\bld{\beta}^\top_{\text{true}}=(0.10,0.05,-0.15,0.30)$.

Let $t = 1, \dots, T_{\max}$ denote discrete time points with $T_{\max}=300$.
Conditional on being at risk at time $t$, the event probability was defined as
\[
\Pr(T_i = t \mid T_i \ge t, \boldsymbol{x}_i)
= \mathrm{logit}^{-1}\left( \alpha_t + \boldsymbol{x}_i^\top \boldsymbol{\beta} \right),
\]
where $\{\alpha_t\}$ specifies the baseline hazard over time.

We considered the following baseline hazard scenarios:
\begin{itemize}
\item Scenario 1: constant hazard, $\alpha_t = \alpha_0$,
\item Scenario 2: decreasing hazard, $\alpha_t = \alpha_0 + 1.2 \left(1 - \frac{t-1}{T_{\max}-1}\right)$,
\item Scenario 3: increasing hazard, $\alpha_t = \alpha_0 + 1.2 \frac{t-1}{T_{\max}-1}$,
\end{itemize}
The intercept parameter was set to $\alpha_0=-5.0$ to control the overall event rate.

Event times were generated sequentially using Bernoulli trials at each time point until failure or censoring occurred.
Right censoring times were independently generated from a discrete uniform distribution over $\{1,\dots,T_{\max}\}$.

The generated event and censoring times were coarsened to a grid with unit width $u$.
Specifically, event times were rounded up as $T_i^{\text{event}} = u \cdot \left\lceil T_i/u \right\rceil$, while censoring times were rounded down as $C_i^{\text{obs}} = u \cdot \left\lfloor C_i/u \right\rfloor$.
The observed time and event indicators were then defined as $T_i^{\text{obs}} = \min(T_i^{\text{event}}, C_i^{\text{obs}}), \delta_i = \mathds{1}\left(T_i^{\text{event}} \le C_i^{\text{obs}}\right)$.
We considered the following coarsening levels: $u \in \{1, 7, 14, 28\}$.

\paragraph{Continuous survival time with non-proportional hazard scenarios.}
We generated survival data from a non-proportional hazards model, which violates the proportional hazards assumption.
For each simulated dataset, the covariates consisted of four independent variables generated from the standard normal distribution.
The regression coefficients were set to $\bld{\beta}^\top_{\text{true}}=(0.10,0.05,-0.15,0.30)$.
Right censoring was introduced independently of the failure time.

Event times $T_i^\ast$ were generated from a log-normal distribution with $\mu$ and $\sigma$,
\[
\log (T_i) = \mathrm{N}(\log{\mu} - \boldsymbol{x}_i^\top \boldsymbol{\beta} , \sigma^2).
\]
Four baseline hazard scenarios were considered:
\begin{itemize}
\item Scenario 1: $\mu = 60$, $\sigma = 0.6$ (baseline),
\item Scenario 2: $\mu = 35$, $\sigma = 0.6$ (eariler events),
\item Scenario 3: $\mu = 85$, $\sigma = 0.6$ (later events),
\item Scenario 4: $\mu = 60$, $\sigma = 1.0$ (heavier tail).
\end{itemize}
Independent censoring times were generated as $C_i \sim \mathrm{Unif}(0, 300)$, and the observed data were defined as $T_i = \min(T_i^\ast, C_i), \delta_i = \mathds{1}(T_i^\ast \le C_i)$.

To investigate the effect of tied event times, the observed survival times were coarsened using the same procedure as in the discrete-time setting, where event times were rounded up and censoring times were rounded down to the observation grid.

\paragraph{Analyses.}
We used the \texttt{BayesPLCox} package, which was developed for the implementation of PL-Cox and GPL-Cox models.
We used the \texttt{coxph} package to implement the Efron, Breslow, and Exact methods.
The Ren method was implemented using the R code provided in the supplementary materials of \citet{renCoxPolyaGammaAlgorithmFlexible2025}.
To implement the Tamano method, we converted the Python code available in the \texttt{GS4Cox} GitHub repository (\url{https://github.com/shutech2001/GS4Cox}) into R.
Posterior inference was carried out using Markov chain Monte Carlo with 3,000 iterations, discarding the first 1,000 iterations as burn-in.

\paragraph{Performance measures and others.}
We calculated the bias of the posterior mean of $\bld{\beta}$.
In addition, we calculated the Monte Carlo standard deviation of the point estimates, the root mean square error, coverage probability of 95\% posterior credible interval, and average length.
In the non-proportional hazard scenarios, given that inference under the Cox model was conducted under model misspecification, we simulated data for 500{,}000 individuals from the same data-generating mechanism without rounding.
We then applied the Efron method and considered the resulting point estimate as a pseudo-true parameter corresponding to the Cox model.
All data generation and analysis were performed using  \texttt{R} version 4.4.2 for Linux on the supercomputer system of the Institute of Statistical Mathematics, Tokyo, Japan. 
We generated $10{,}000$ replicated datasets for each scenario.

\newpage

\subsection{Full results}

\subsubsection{Continuous survival time scenarios.}

\begin{table}[htbp]
\centering
\caption{Simulation results for the regression coefficient $\beta_4$ under continuous survival times generated from an Weibull model with decreasing hazard ($n=300$). The rounding levels indicate the coarsening width applied to the observed times, where ``None'' corresponds to no rounding. Reported metrics are empirical bias (Bias), the Monte Carlo standard deviation (SD) of the point estimates, root mean squared error (RMSE), coverage probability of the 95\% interval (CP), and average interval width (AW).}
\begin{tabular}{ccrcccc}
\hline
Sce   & Method  & \multicolumn{1}{c}{Bias}   & SD    & RMSE  & CP    & AW    \\
\hline
None & Breslow & 0.004  & 0.072 & 0.072 & 94.55 & 0.279 \\
     & Efron   & 0.004  & 0.072 & 0.072 & 94.55 & 0.279 \\
     & Exact   & 0.004  & 0.072 & 0.072 & 94.55 & 0.279 \\
     & PL-Cox  & $-$0.010 & 0.069 & 0.069 & 96.57 & 0.296 \\
     & GPL-Cox & $-$0.011 & 0.073 & 0.074 & 88.64 & 0.245 \\
     & Ren     & 0.012  & 0.074 & 0.075 & 93.25 & 0.277 \\
     & Tamano  & 0.004  & 0.072 & 0.072 & 94.61 & 0.281 \\
\hline
0.01 & Breslow & 0.004  & 0.072 & 0.072 & 94.58 & 0.279 \\
     & Efron   & 0.004  & 0.072 & 0.072 & 94.55 & 0.279 \\
     & Exact   & 0.005  & 0.072 & 0.072 & 94.55 & 0.280 \\
     & PL-Cox  & $-$0.010 & 0.069 & 0.069 & 96.67 & 0.296 \\
     & GPL-Cox & 0.012  & 0.076 & 0.076 & 91.38 & 0.271 \\
     & Ren     & 0.012  & 0.074 & 0.075 & 93.02 & 0.277 \\
     & Tamano  & 0.004  & 0.072 & 0.072 & 94.58 & 0.281 \\
\hline
0.1  & Breslow & 0.002  & 0.071 & 0.071 & 94.94 & 0.279 \\
     & Efron   & 0.004  & 0.072 & 0.072 & 94.57 & 0.279 \\
     & Exact   & 0.007  & 0.073 & 0.073 & 94.45 & 0.282 \\
     & PL-Cox  & $-$0.012 & 0.068 & 0.069 & 96.72 & 0.296 \\
     & GPL-Cox & 0.034  & 0.080 & 0.087 & 89.16 & 0.280 \\
     & Ren     & 0.013  & 0.075 & 0.076 & 92.96 & 0.277 \\
     & Tamano  & 0.002  & 0.071 & 0.071 & 94.90 & 0.281 \\
\hline
0.5  & Breslow & $-$0.007 & 0.069 & 0.069 & 95.53 & 0.279 \\
     & Efron   & 0.004  & 0.072 & 0.072 & 94.67 & 0.279 \\
     & Exact   & 0.016  & 0.075 & 0.077 & 93.97 & 0.291 \\
     & PL-Cox  & $-$0.018 & 0.067 & 0.069 & 96.69 & 0.296 \\
     & GPL-Cox & 0.055  & 0.085 & 0.101 & 85.00 & 0.289 \\
     & Ren     & 0.016  & 0.075 & 0.077 & 92.42 & 0.277 \\
     & Tamano  & $-$0.007 & 0.069 & 0.069 & 95.48 & 0.280 \\
\hline
\end{tabular}
\end{table}

\begin{table}[htbp]
\centering
\caption{Simulation results for the regression coefficient $\beta_4$ under continuous survival times generated from an Weibull model with increasing  hazard ($n=300$). The rounding levels indicate the coarsening width applied to the observed times, where ``None'' corresponds to no rounding. Reported metrics are empirical bias (Bias), the Monte Carlo standard deviation (SD) of the point estimates, root mean squared error (RMSE), coverage probability of the 95\% interval (CP), and average interval width (AW).}
\begin{tabular}{ccrcccc}
\hline
Sce   & Method  & \multicolumn{1}{c}{Bias}   & SD    & RMSE  & CP    & AW    \\
\hline
None & Breslow & 0.006  & 0.077 & 0.077 & 94.93 & 0.297 \\
     & Efron   & 0.006  & 0.077 & 0.077 & 94.93 & 0.297 \\
     & Exact   & 0.006  & 0.077 & 0.077 & 94.93 & 0.297 \\
     & PL-Cox  & $-$0.012 & 0.072 & 0.073 & 96.45 & 0.313 \\
     & GPL-Cox & $-$0.010 & 0.077 & 0.078 & 89.12 & 0.263 \\
     & Ren     & 0.013  & 0.079 & 0.080 & 93.54 & 0.294 \\
     & Tamano  & 0.006  & 0.077 & 0.077 & 95.03 & 0.30  \\
\hline
0.01 & Breslow & 0.006  & 0.077 & 0.077 & 94.95 & 0.297 \\
     & Efron   & 0.006  & 0.077 & 0.077 & 94.91 & 0.297 \\
     & Exact   & 0.006  & 0.077 & 0.077 & 94.88 & 0.297 \\
     & PL-Cox  & $-$0.012 & 0.072 & 0.073 & 96.44 & 0.313 \\
     & GPL-Cox & $-$0.002 & 0.078 & 0.078 & 91.42 & 0.278 \\
     & Ren     & 0.013  & 0.079 & 0.080 & 93.40 & 0.294 \\
     & Tamano  & 0.006  & 0.077 & 0.077 & 95.08 & 0.300 \\
\hline
0.1  & Breslow & 0.004  & 0.077 & 0.077 & 95.02 & 0.297 \\
     & Efron   & 0.006  & 0.077 & 0.077 & 94.9  & 0.297 \\
     & Exact   & 0.008  & 0.078 & 0.078 & 94.93 & 0.299 \\
     & PL-Cox  & $-$0.013 & 0.072 & 0.073 & 96.37 & 0.313 \\
     & GPL-Cox & $-$0.026 & 0.069 & 0.074 & 94.92 & 0.292 \\
     & Ren     & 0.014  & 0.079 & 0.081 & 93.21 & 0.294 \\
     & Tamano  & 0.004  & 0.077 & 0.077 & 95.23 & 0.300 \\
\hline
0.5  & Breslow & $-$0.003 & 0.075 & 0.075 & 95.38 & 0.296 \\
     & Efron   & 0.006  & 0.077 & 0.077 & 94.90 & 0.297 \\
     & Exact   & 0.015  & 0.079 & 0.081 & 94.67 & 0.307 \\
     & PL-Cox  & $-$0.018 & 0.071 & 0.073 & 96.37 & 0.313 \\
     & GPL-Cox & $-$0.072 & 0.059 & 0.093 & 89.72 & 0.295 \\
     & Ren     & 0.018  & 0.080 & 0.082 & 92.87 & 0.294 \\
     & Tamano  & $-$0.004 & 0.075 & 0.075 & 95.55 & 0.299 \\
\hline
\end{tabular}
\end{table}

\newpage

\subsubsection{Discrete survival time scenarios.}

\begin{table}[htbp]
\centering
\caption{Simulation results for the regression coefficient $\beta_4$ under discrete survival times generated from a logistic hazard model with decreasing hazard ($n=300$). The coarsening levels indicate the observation grid width applied to the event and censoring times. Reported metrics are empirical bias (Bias), the Monte Carlo standard deviation (SD) of the point estimates, root mean squared error (RMSE), coverage probability of the 95\% interval (CP), and average interval width (AW).}
\begin{tabular}{ccrcccc}
\hline
Sce   & Method  & \multicolumn{1}{c}{Bias}   & SD    & RMSE  & CP    & AW    \\
\hline
1   & Breslow & $-$0.003 & 0.069 & 0.069 & 94.69 & 0.265 \\
    & Efron   & 0.000  & 0.070 & 0.070 & 94.47 & 0.265 \\
    & Exact   & 0.004  & 0.070 & 0.071 & 94.39 & 0.269 \\
    & PL-Cox  & $-$0.027 & 0.063 & 0.069 & 95.69 & 0.280 \\
    & GPL-Cox & 0.013  & 0.072 & 0.073 & 93.03 & 0.265 \\
    & Ren     & 0.011  & 0.072 & 0.073 & 92.88 & 0.263 \\
    & Tamano  & $-$0.003 & 0.069 & 0.069 & 94.68 & 0.267 \\
\hline
7   & Breslow & $-$0.022 & 0.064 & 0.067 & 94.97 & 0.266 \\
    & Efron   & 0.000  & 0.069 & 0.069 & 94.84 & 0.267 \\
    & Exact   & 0.024  & 0.075 & 0.079 & 93.83 & 0.290 \\
    & PL-Cox  & $-$0.039 & 0.060 & 0.072 & 94.93 & 0.282 \\
    & GPL-Cox & 0.047  & 0.080 & 0.093 & 88.19 & 0.286 \\
    & Ren     & 0.019  & 0.074 & 0.076 & 91.74 & 0.265 \\
    & Tamano  & $-$0.022 & 0.064 & 0.067 & 95.10 & 0.267 \\
\hline
14  & Breslow & $-$0.042 & 0.059 & 0.073 & 93.14 & 0.267 \\
    & Efron   & $-$0.002 & 0.069 & 0.069 & 94.69 & 0.269 \\
    & Exact   & 0.048  & 0.082 & 0.095 & 91.33 & 0.316 \\
    & PL-Cox  & $-$0.053 & 0.058 & 0.078 & 93.15 & 0.283 \\
    & GPL-Cox & 0.077  & 0.087 & 0.116 & 81.40 & 0.312 \\
    & Ren     & 0.026  & 0.077 & 0.081 & 89.93 & 0.267 \\
    & Tamano  & $-$0.042 & 0.059 & 0.073 & 93.15 & 0.267 \\
\hline
28  & Breslow & $-$0.080 & 0.052 & 0.095 & 84.77 & 0.270 \\
    & Efron   & $-$0.010 & 0.068 & 0.069 & 95.30 & 0.272 \\
    & Exact   & 0.095  & 0.095 & 0.135 & 84.10 & 0.373 \\
    & PL-Cox  & $-$0.083 & 0.051 & 0.098 & 86.58 & 0.284 \\
    & GPL-Cox & 0.131  & 0.102 & 0.166 & 69.35 & 0.368 \\
    & Ren     & 0.034  & 0.081 & 0.088 & 88.49 & 0.272 \\
    & Tamano  & $-$0.080 & 0.051 & 0.095 & 84.53 & 0.268 \\
\hline
\end{tabular}
\end{table}

\begin{table}[htbp]
\centering
\caption{Simulation results for the regression coefficient $\beta_4$ under discrete survival times generated from a logistic hazard model with increasing hazard ($n=300$). The coarsening levels indicate the observation grid width applied to the event and censoring times. Reported metrics are empirical bias (Bias), the Monte Carlo standard deviation (SD) of the point estimates, root mean squared error (RMSE), coverage probability of the 95\% interval (CP), and average interval width (AW).}
\begin{tabular}{ccrcccc}
\hline
Sce   & Method  & \multicolumn{1}{c}{Bias}   & SD    & RMSE  & CP    & AW    \\
\hline
1   & Breslow & 0.003  & 0.075 & 0.075 & 95.24 & 0.294 \\
    & Efron   & 0.004  & 0.076 & 0.076 & 95.10 & 0.294 \\
    & Exact   & 0.006  & 0.076 & 0.076 & 95.06 & 0.296 \\
    & PL-Cox  & $-$0.010 & 0.072 & 0.073 & 96.92 & 0.311 \\
    & GPL-Cox & $-$0.003 & 0.074 & 0.074 & 94.99 & 0.291 \\
    & Ren     & 0.011  & 0.078 & 0.079 & 93.89 & 0.291 \\
    & Tamano  & 0.003  & 0.075 & 0.075 & 95.20 & 0.296 \\
\hline
7   & Breslow & $-$0.006 & 0.073 & 0.073 & 95.67 & 0.296 \\
    & Efron   & 0.004  & 0.076 & 0.076 & 95.09 & 0.297 \\
    & Exact   & 0.015  & 0.079 & 0.080 & 94.80 & 0.308 \\
    & PL-Cox  & $-$0.015 & 0.071 & 0.073 & 96.94 & 0.313 \\
    & GPL-Cox & $-$0.012 & 0.072 & 0.073 & 96.11 & 0.302 \\
    & Ren     & 0.014  & 0.079 & 0.080 & 93.20 & 0.294 \\
    & Tamano  & $-$0.007 & 0.073 & 0.073 & 95.73 & 0.298 \\
\hline
14  & Breslow & $-$0.016 & 0.072 & 0.074 & 95.88 & 0.298 \\
    & Efron   & 0.004  & 0.077 & 0.078 & 95.00 & 0.299 \\
    & Exact   & 0.026  & 0.083 & 0.087 & 93.87 & 0.323 \\
    & PL-Cox  & $-$0.021 & 0.071 & 0.074 & 96.70 & 0.315 \\
    & GPL-Cox & $-$0.005 & 0.075 & 0.075 & 96.61 & 0.316 \\
    & Ren     & 0.019  & 0.082 & 0.084 & 92.18 & 0.297 \\
    & Tamano  & $-$0.016 & 0.072 & 0.074 & 95.82 & 0.300 \\
\hline
28  & Breslow & $-$0.037 & 0.068 & 0.078 & 94.88 & 0.302 \\
    & Efron   & 0.001  & 0.079 & 0.079 & 94.86 & 0.304 \\
    & Exact   & 0.046  & 0.091 & 0.102 & 92.08 & 0.353 \\
    & PL-Cox  & $-$0.036 & 0.069 & 0.078 & 95.99 & 0.319 \\
    & GPL-Cox & 0.013  & 0.082 & 0.083 & 96.17 & 0.346 \\
    & Ren     & 0.023  & 0.086 & 0.089 & 91.19 & 0.303 \\
    & Tamano  & $-$0.037 & 0.068 & 0.078 & 94.99 & 0.303 \\
\hline
\end{tabular}
\end{table}

\newpage

\subsubsection{Continuous survival time with non-proportional hazard scenarios}

\begin{table}[htbp]
\centering
\caption{Simulation results for the regression coefficient $\beta_4$ under continuous survival times generated from non-proportional hazard model with baseline scenario ($n=300$). The coarsening levels indicate the observation grid width applied to the event and censoring times. Reported metrics are empirical bias (Bias), the Monte Carlo standard deviation (SD) of the point estimates, root mean squared error (RMSE), coverage probability of the 95\% interval (CP), and average interval width (AW).}
\begin{tabular}{ccrcccc}
\hline
Sce   & Method  & \multicolumn{1}{c}{Bias}   & SD    & RMSE  & CP    & AW    \\
\hline
1   & Breslow & 0.010  & 0.081 & 0.081 & 92.48 & 0.288 \\
    & Efron   & 0.015  & 0.082 & 0.083 & 92.01 & 0.288 \\
    & Exact   & 0.021  & 0.083 & 0.085 & 91.79 & 0.292 \\
    & PL-Cox  & $-$0.038 & 0.070 & 0.079 & 93.97 & 0.303 \\
    & GPL-Cox & $-$0.049 & 0.068 & 0.084 & 90.16 & 0.281 \\
    & Ren     & 0.024  & 0.085 & 0.088 & 89.49 & 0.285 \\
    & Tamano  & 0.009  & 0.081 & 0.081 & 92.60 & 0.290 \\
\hline
7   & Breslow & $-$0.017 & 0.075 & 0.076 & 93.90 & 0.289 \\
    & Efron   & 0.013  & 0.081 & 0.082 & 92.66 & 0.290 \\
    & Exact   & 0.057  & 0.088 & 0.105 & 88.09 & 0.318 \\
    & PL-Cox  & $-$0.054 & 0.066 & 0.085 & 92.40 & 0.304 \\
    & GPL-Cox & $-$0.098 & 0.061 & 0.115 & 77.89 & 0.293 \\
    & Ren     & 0.032  & 0.087 & 0.092 & 88.36 & 0.287 \\
    & Tamano  & $-$0.018 & 0.074 & 0.076 & 93.55 & 0.288 \\
\hline
14  & Breslow & $-$0.047 & 0.071 & 0.085 & 90.30 & 0.289 \\
    & Efron   & 0.008  & 0.082 & 0.083 & 92.36 & 0.291 \\
    & Exact   & 0.098  & 0.097 & 0.137 & 79.89 & 0.347 \\
    & PL-Cox  & $-$0.071 & 0.065 & 0.096 & 88.72 & 0.307 \\
    & GPL-Cox & $-$0.111 & 0.058 & 0.126 & 76.11 & 0.308 \\
    & Ren     & 0.037  & 0.091 & 0.098 & 86.54 & 0.289 \\
    & Tamano  & $-$0.048 & 0.070 & 0.085 & 89.69 & 0.286 \\
\hline
28  & Breslow & $-$0.099 & 0.063 & 0.117 & 76.08 & 0.291 \\
    & Efron   & $-$0.008 & 0.081 & 0.081 & 93.12 & 0.294 \\
    & Exact   & 0.168  & 0.109 & 0.201 & 63.94 & 0.403 \\
    & PL-Cox  & $-$0.103 & 0.060 & 0.119 & 79.15 & 0.310 \\
    & GPL-Cox & $-$0.057 & 0.069 & 0.090 & 95.30 & 0.351 \\
    & Ren     & 0.038  & 0.096 & 0.103 & 84.34 & 0.293 \\
    & Tamano  & $-$0.099 & 0.062 & 0.117 & 74.43 & 0.285 \\
\hline
\end{tabular}
\end{table}

\begin{table}[htbp]
\centering
\caption{Simulation results for the regression coefficient $\beta_4$ under continuous survival times generated from non-proportional hazard model with eariler events scenario ($n=300$). The coarsening levels indicate the observation grid width applied to the event and censoring times. Reported metrics are empirical bias (Bias), the Monte Carlo standard deviation (SD) of the point estimates, root mean squared error (RMSE), coverage probability of the 95\% interval (CP), and average interval width (AW).}
\begin{tabular}{ccrcccc}
\hline
Sce   & Method  & \multicolumn{1}{c}{Bias}   & SD    & RMSE  & CP    & AW    \\
\hline
1   & Breslow & 0.009  & 0.075 & 0.076 & 92.68 & 0.270 \\
    & Efron   & 0.016  & 0.077 & 0.078 & 91.71 & 0.271 \\
    & Exact   & 0.026  & 0.078 & 0.083 & 91.04 & 0.277 \\
    & PL-Cox  & $-$0.053 & 0.063 & 0.082 & 91.54 & 0.285 \\
    & GPL-Cox & $-$0.071 & 0.059 & 0.092 & 84.06 & 0.263 \\
    & Ren     & 0.027  & 0.080 & 0.085 & 88.82 & 0.267 \\
    & Tamano  & 0.008  & 0.075 & 0.075 & 92.66 & 0.272 \\
\hline
7   & Breslow & $-$0.036 & 0.067 & 0.076 & 91.86 & 0.270 \\
    & Efron   & 0.012  & 0.077 & 0.078 & 92.19 & 0.272 \\
    & Exact   & 0.085  & 0.088 & 0.122 & 81.22 & 0.315 \\
    & PL-Cox  & $-$0.082 & 0.058 & 0.101 & 83.90 & 0.284 \\
    & GPL-Cox & $-$0.115 & 0.054 & 0.127 & 66.84 & 0.281 \\
    & Ren     & 0.037  & 0.085 & 0.093 & 86.06 & 0.269 \\
    & Tamano  & $-$0.036 & 0.067 & 0.076 & 91.56 & 0.268 \\
\hline
14  & Breslow & $-$0.081 & 0.060 & 0.100 & 80.37 & 0.269 \\
    & Efron   & 0.000  & 0.075 & 0.075 & 93.07 & 0.272 \\
    & Exact   & 0.147  & 0.098 & 0.177 & 64.22 & 0.358 \\
    & PL-Cox  & $-$0.109 & 0.054 & 0.122 & 72.34 & 0.287 \\
    & GPL-Cox & $-$0.080 & 0.058 & 0.099 & 89.62 & 0.312 \\
    & Ren     & 0.043  & 0.088 & 0.098 & 83.88 & 0.270 \\
    & Tamano  & $-$0.081 & 0.059 & 0.101 & 79.18 & 0.265 \\
\hline
28  & Breslow & $-$0.160 & 0.048 & 0.167 & 29.53 & 0.267 \\
    & Efron   & $-$0.040 & 0.070 & 0.080 & 90.52 & 0.271 \\
    & Exact   & 0.231  & 0.116 & 0.258 & 46.89 & 0.438 \\
    & PL-Cox  & $-$0.164 & 0.047 & 0.170 & 32.13 & 0.284 \\
    & GPL-Cox & 0.032  & 0.080 & 0.086 & 97.72 & 0.390 \\
    & Ren     & 0.031  & 0.095 & 0.100 & 82.55 & 0.271 \\
    & Tamano  & $-$0.160 & 0.048 & 0.167 & 26.78 & 0.259 \\
\hline
\end{tabular}
\end{table}

\begin{table}[htbp]
\centering
\caption{Simulation results for the regression coefficient $\beta_4$ under continuous survival times generated from non-proportional hazard model with later events scenario ($n=300$). The coarsening levels indicate the observation grid width applied to the event and censoring times. Reported metrics are empirical bias (Bias), the Monte Carlo standard deviation (SD) of the point estimates, root mean squared error (RMSE), coverage probability of the 95\% interval (CP), and average interval width (AW).}
\begin{tabular}{ccrcccc}
\hline
Sce   & Method  & \multicolumn{1}{c}{Bias}   & SD    & RMSE  & CP    & AW    \\
\hline
1   & Breslow & 0.011  & 0.087 & 0.088 & 92.35 & 0.309 \\
    & Efron   & 0.014  & 0.088 & 0.089 & 91.86 & 0.309 \\
    & Exact   & 0.019  & 0.089 & 0.091 & 91.77 & 0.312 \\
    & PL-Cox  & $-$0.023 & 0.078 & 0.081 & 95.29 & 0.325 \\
    & GPL-Cox & $-$0.034 & 0.077 & 0.084 & 92.08 & 0.301 \\
    & Ren     & 0.024  & 0.091 & 0.094 & 90.07 & 0.306 \\
    & Tamano  & 0.010  & 0.087 & 0.088 & 92.39 & 0.311 \\
\hline
7   & Breslow & $-$0.011 & 0.083 & 0.083 & 93.82 & 0.310 \\
    & Efron   & 0.012  & 0.088 & 0.088 & 92.36 & 0.311 \\
    & Exact   & 0.044  & 0.093 & 0.103 & 90.42 & 0.332 \\
    & PL-Cox  & $-$0.035 & 0.075 & 0.083 & 95.06 & 0.328 \\
    & GPL-Cox & $-$0.095 & 0.066 & 0.116 & 80.94 & 0.310 \\
    & Ren     & 0.028  & 0.092 & 0.096 & 89.42 & 0.308 \\
    & Tamano  & $-$0.012 & 0.082 & 0.083 & 93.38 & 0.310 \\
\hline
14  & Breslow & $-$0.032 & 0.079 & 0.085 & 92.56 & 0.311 \\
    & Efron   & 0.010  & 0.089 & 0.089 & 92.45 & 0.313 \\
    & Exact   & 0.076  & 0.099 & 0.125 & 86.35 & 0.356 \\
    & PL-Cox  & $-$0.046 & 0.074 & 0.087 & 93.92 & 0.331 \\
    & GPL-Cox & $-$0.103 & 0.069 & 0.124 & 78.87 & 0.323 \\
    & Ren     & 0.033  & 0.096 & 0.101 & 88.04 & 0.311 \\
    & Tamano  & $-$0.033 & 0.079 & 0.085 & 92.05 & 0.310 \\
\hline
28  & Breslow & $-$0.073 & 0.072 & 0.102 & 86.71 & 0.314 \\
    & Efron   & 0.002  & 0.088 & 0.088 & 93.09 & 0.318 \\
    & Exact   & 0.134  & 0.11  & 0.173 & 74.90 & 0.403 \\
    & PL-Cox  & $-$0.073 & 0.07  & 0.101 & 90.36 & 0.335 \\
    & GPL-Cox & $-$0.085 & 0.07  & 0.110 & 89.88 & 0.354 \\
    & Ren     & 0.039  & 0.10  & 0.108 & 86.10 & 0.316 \\
    & Tamano  & $-$0.073 & 0.072 & 0.103 & 85.82 & 0.311 \\
\hline
\end{tabular}
\end{table}

\begin{table}[htbp]
\centering
\caption{Simulation results for the regression coefficient $\beta_4$ under continuous survival times generated from non-proportional hazard model with heavier tail scenario ($n=300$). The coarsening levels indicate the observation grid width applied to the event and censoring times. Reported metrics are empirical bias (Bias), the Monte Carlo standard deviation (SD) of the point estimates, root mean squared error (RMSE), coverage probability of the 95\% interval (CP), and average interval width (AW).}
\begin{tabular}{ccrcccc}
\hline
Sce   & Method  & \multicolumn{1}{c}{Bias}   & SD    & RMSE  & CP    & AW    \\
\hline
1   & Breslow & 0.007  & 0.076 & 0.076 & 94.06 & 0.284 \\
    & Efron   & 0.009  & 0.076 & 0.077 & 93.94 & 0.284 \\
    & Exact   & 0.011  & 0.077 & 0.078 & 93.85 & 0.286 \\
    & PL-Cox  & $-$0.009 & 0.071 & 0.071 & 96.48 & 0.301 \\
    & GPL-Cox & 0.002  & 0.074 & 0.074 & 94.22 & 0.282 \\
    & Ren     & 0.015  & 0.079 & 0.080 & 92.29 & 0.282 \\
    & Tamano  & 0.007  & 0.076 & 0.076 & 94.00 & 0.285 \\
\hline
7   & Breslow & $-$0.005 & 0.073 & 0.073 & 94.66 & 0.286 \\
    & Efron   & 0.009  & 0.077 & 0.077 & 93.66 & 0.286 \\
    & Exact   & 0.024  & 0.081 & 0.085 & 92.93 & 0.302 \\
    & PL-Cox  & $-$0.016 & 0.070 & 0.071 & 96.34 & 0.304 \\
    & GPL-Cox & 0.003  & 0.075 & 0.075 & 94.87 & 0.295 \\
    & Ren     & 0.019  & 0.081 & 0.083 & 91.12 & 0.284 \\
    & Tamano  & $-$0.005 & 0.073 & 0.073 & 94.54 & 0.286 \\
\hline
14  & Breslow & $-$0.020 & 0.070 & 0.072 & 95.25 & 0.287 \\
    & Efron   & 0.006  & 0.077 & 0.077 & 94.17 & 0.288 \\
    & Exact   & 0.037  & 0.085 & 0.093 & 92.02 & 0.319 \\
    & PL-Cox  & $-$0.025 & 0.068 & 0.072 & 96.60 & 0.305 \\
    & GPL-Cox & 0.018  & 0.080 & 0.082 & 94.80 & 0.312 \\
    & Ren     & 0.020  & 0.083 & 0.085 & 90.91 & 0.287 \\
    & Tamano  & $-$0.020 & 0.070 & 0.072 & 94.94 & 0.287 \\
\hline
28  & Breslow & $-$0.045 & 0.064 & 0.079 & 93.42 & 0.290  \\
    & Efron   & 0.001  & 0.078 & 0.078 & 93.90 & 0.292 \\
    & Exact   & 0.065  & 0.095 & 0.114 & 89.08 & 0.354 \\
    & PL-Cox  & $-$0.042 & 0.064 & 0.077 & 95.67 & 0.307 \\
    & GPL-Cox & 0.053  & 0.091 & 0.105 & 90.65 & 0.348 \\
    & Ren     & 0.024  & 0.088 & 0.091 & 89.54 & 0.291 \\
    & Tamano  & $-$0.045 & 0.064 & 0.079 & 93.02 & 0.288 \\
\hline
\end{tabular}
\end{table}

\newpage

\section{Additional results for real data analysis}

\subsection{Vital Data}

\begin{figure}[H]
	\centering
	\includegraphics[width=1.0\linewidth]{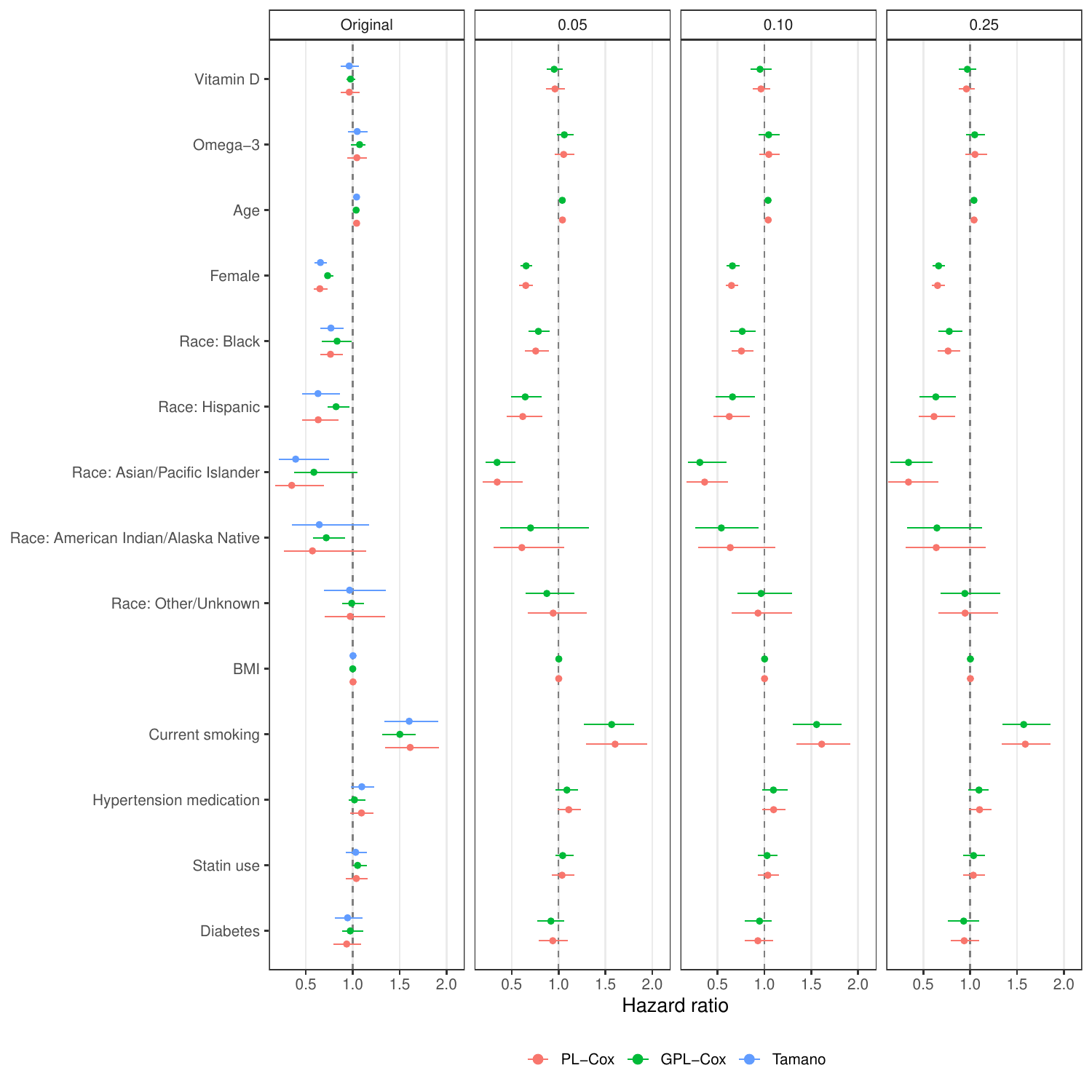}
	\caption{Forest plot for hazard ratios of covariates on the original and coarsened versions of the VITAL dataset.\label{vital_forest}}
\end{figure}

\begin{figure}[H]
	\centering
	\includegraphics[width=0.9\linewidth]{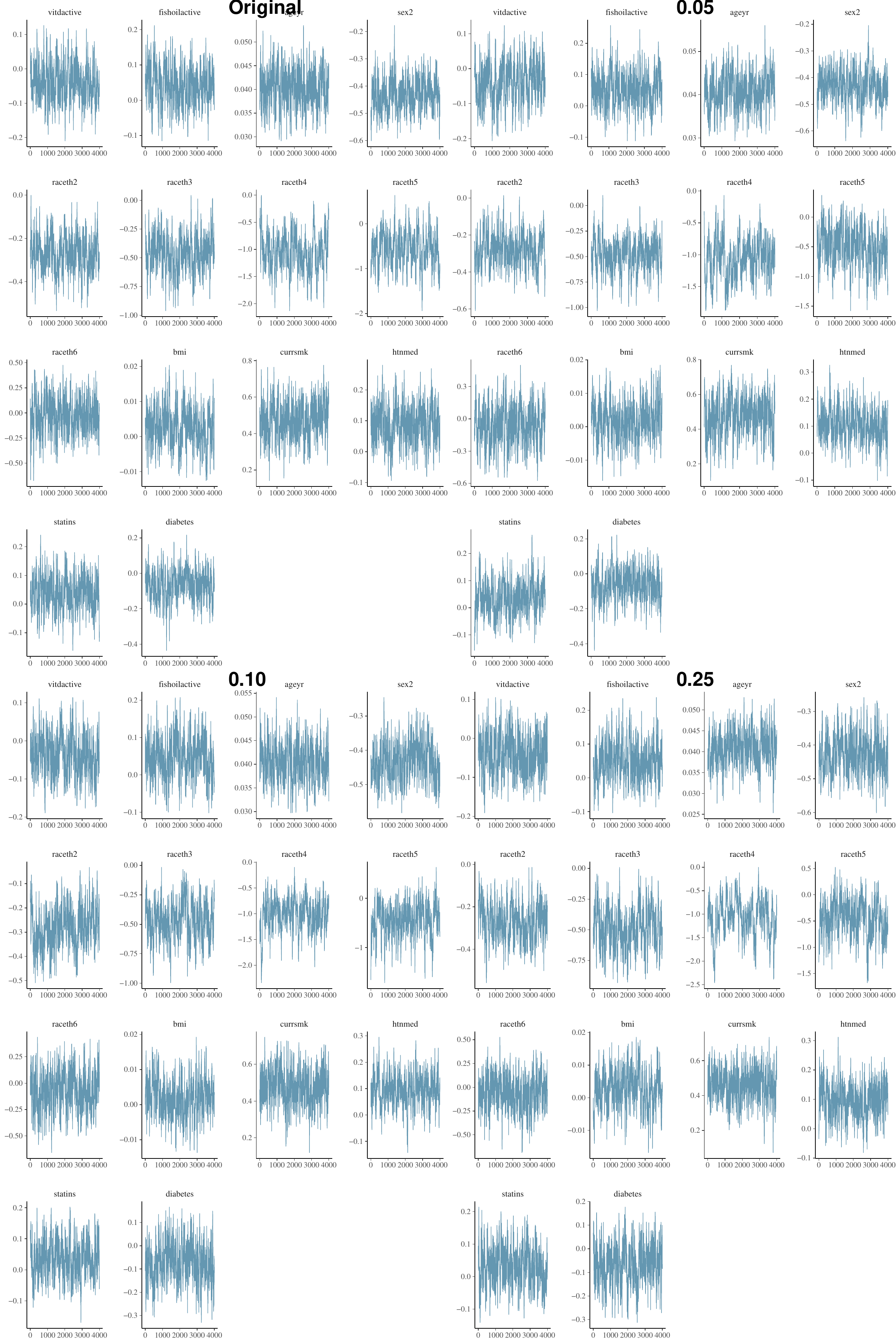}
	\caption{Trace plot for hazard for PL-Cox}
\end{figure}

\begin{figure}[H]
	\centering
	\includegraphics[width=0.9\linewidth]{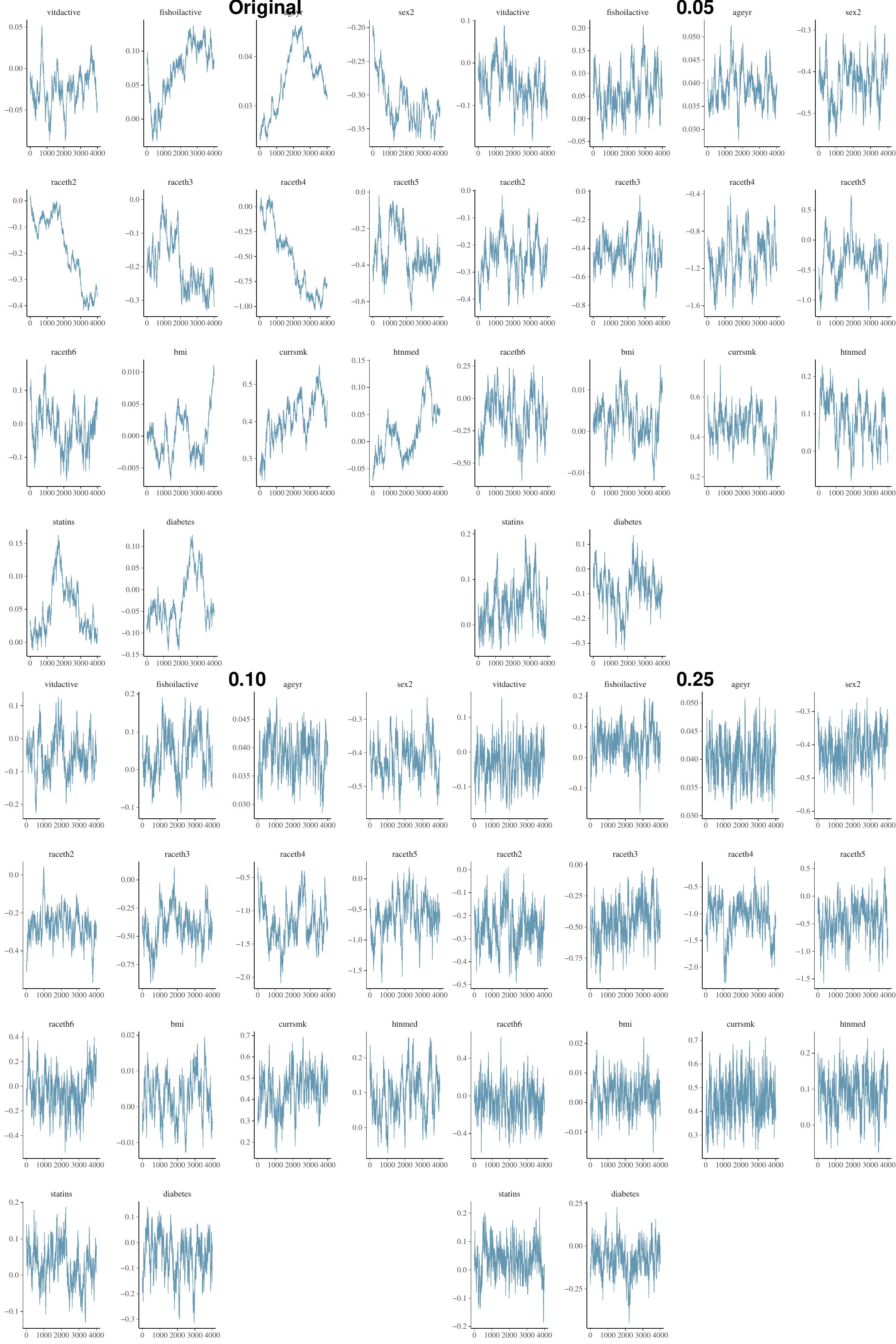}
	\caption{Trace plot for hazard for GPL-Cox}
\end{figure}

\subsection{Readmissions Data}

\begin{figure}[H]
	\centering
	\includegraphics[width=0.9\linewidth]{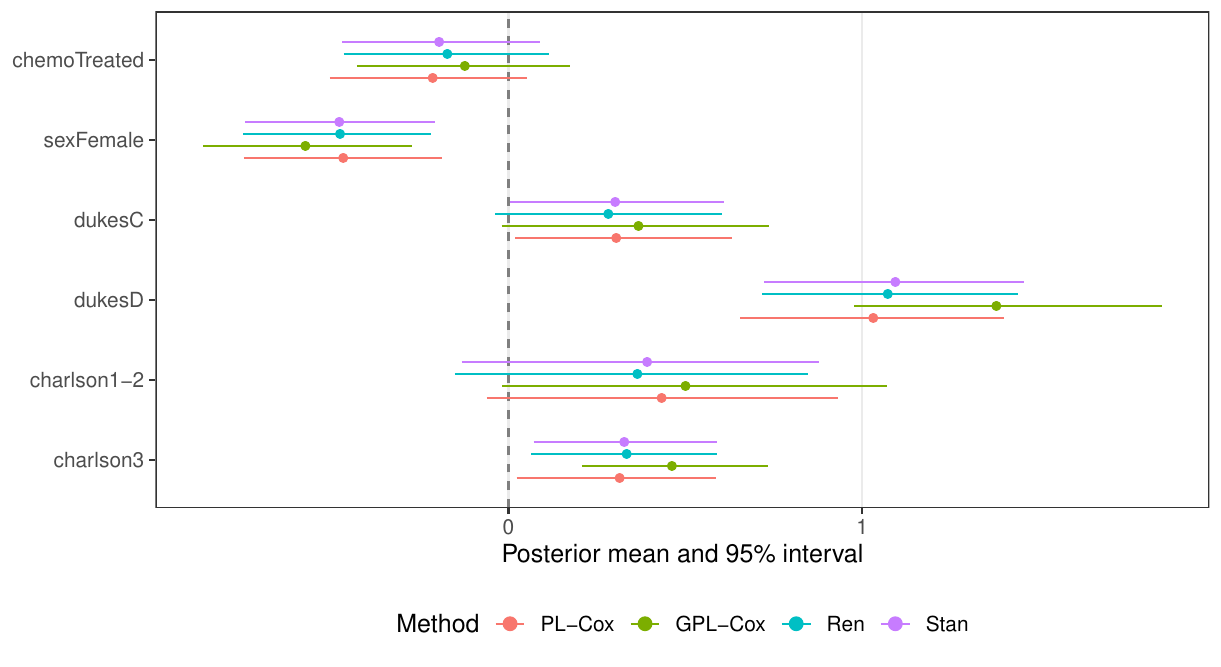}
	\caption{Forest plot for hazard ratios. }
	\label{assume}
\end{figure}

\begin{figure}[H]
	\centering
	\includegraphics[width=0.9\linewidth]{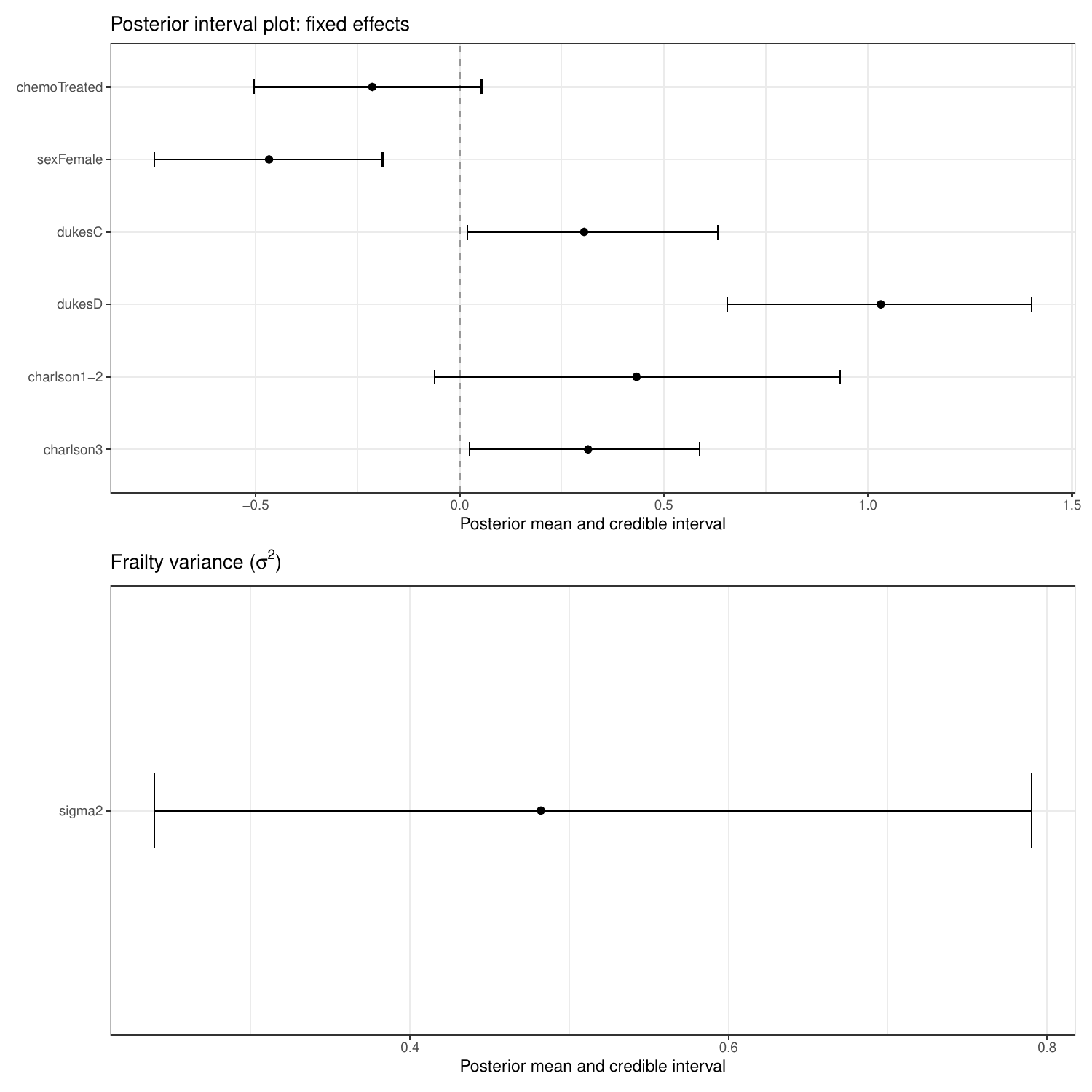}
	\caption{Posterior interval plot for PL-Cox. }
	\label{assume}
\end{figure}

\begin{figure}[H]
	\centering
	\includegraphics[width=0.9\linewidth]{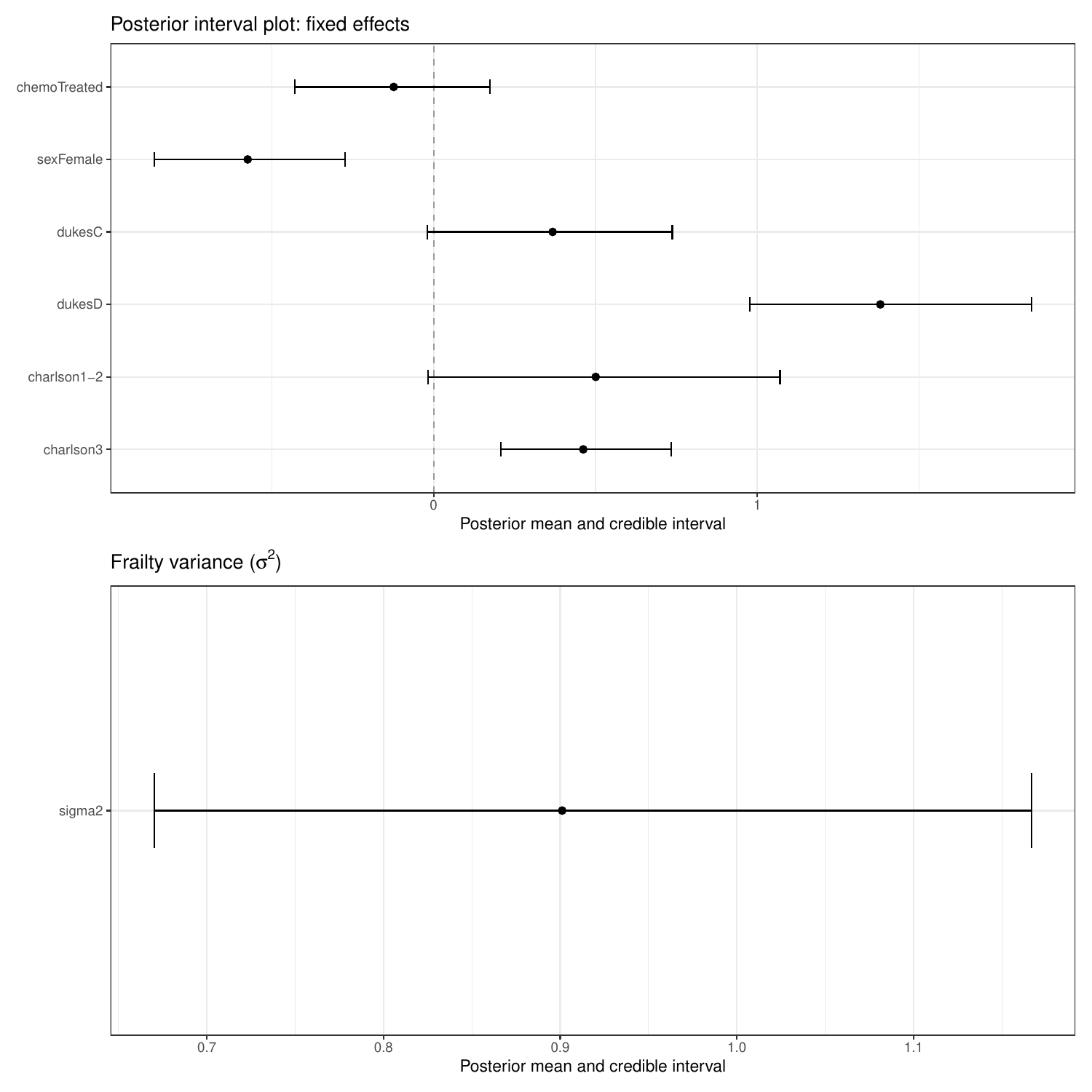}
	\caption{Posterior interval plot for GPL-Cox. }
	\label{assume}
\end{figure}

\begin{figure}[H]
	\centering
	\includegraphics[width=0.9\linewidth]{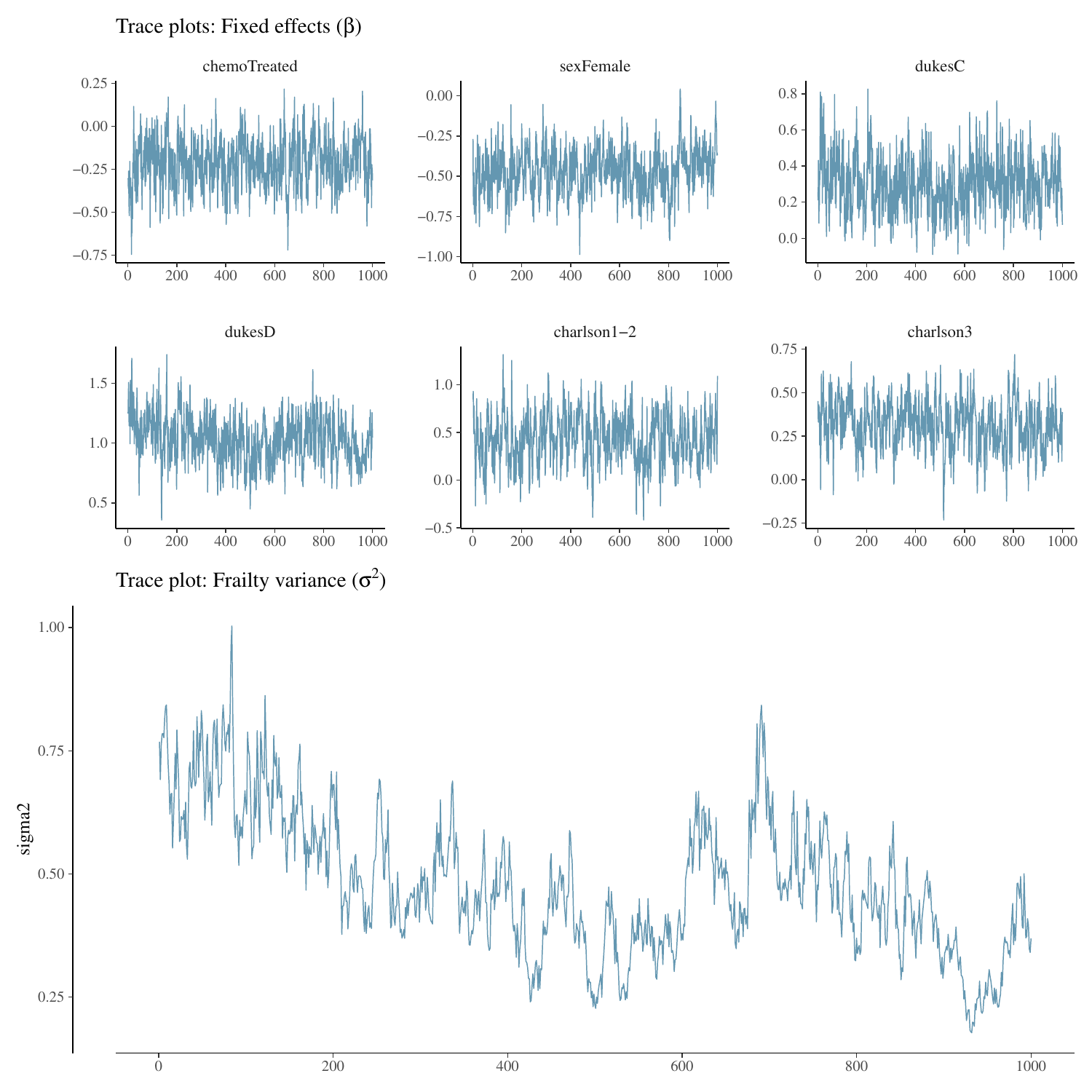}
	\caption{Trace plot for PL-Cox. }
	\label{assume}
\end{figure}

\begin{figure}[H]
	\centering
	\includegraphics[width=0.9\linewidth]{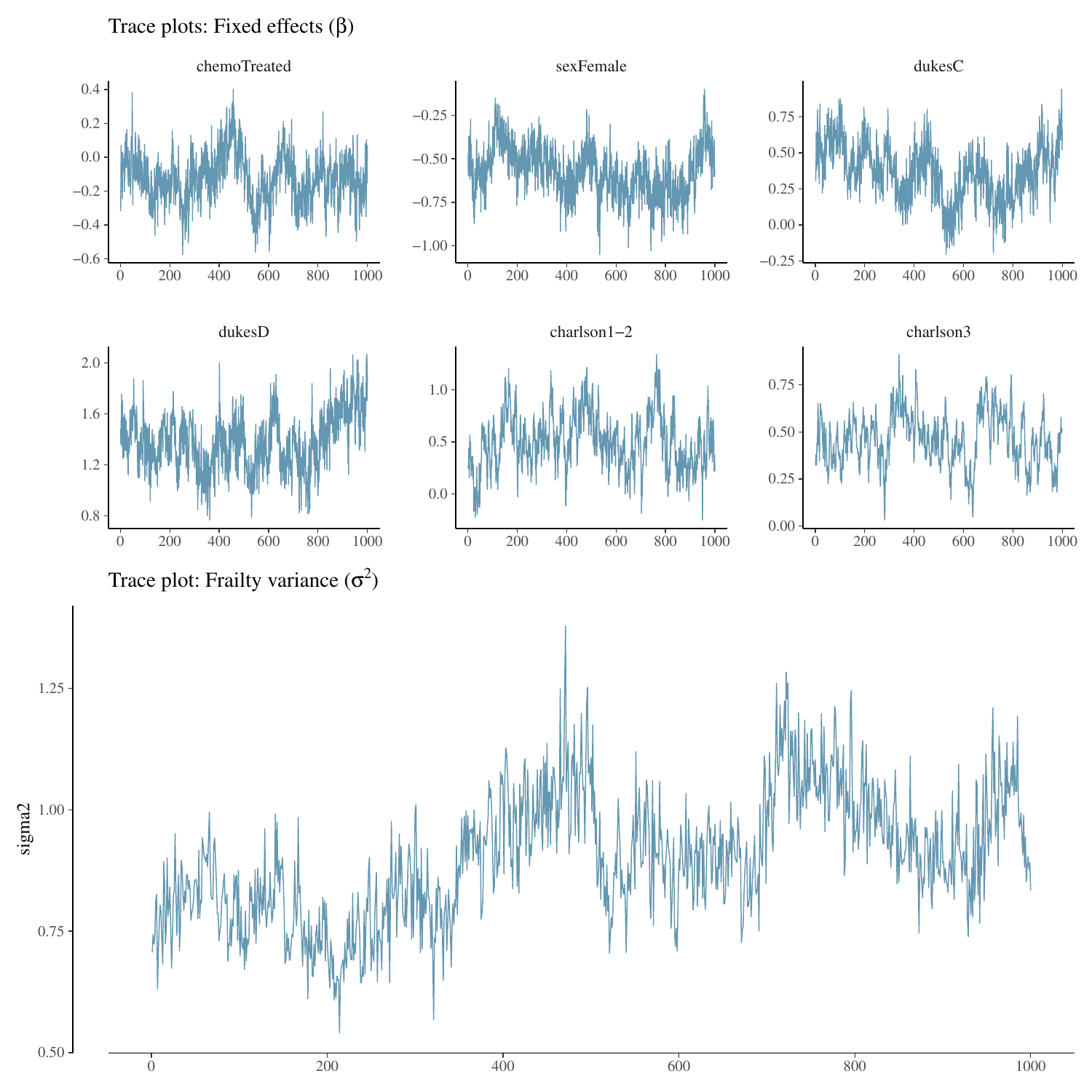}
	\caption{Trace plot for GPL-Cox. }
	\label{assume}
\end{figure}

\begin{figure}[H]
	\centering
	\includegraphics[width=0.8\linewidth]{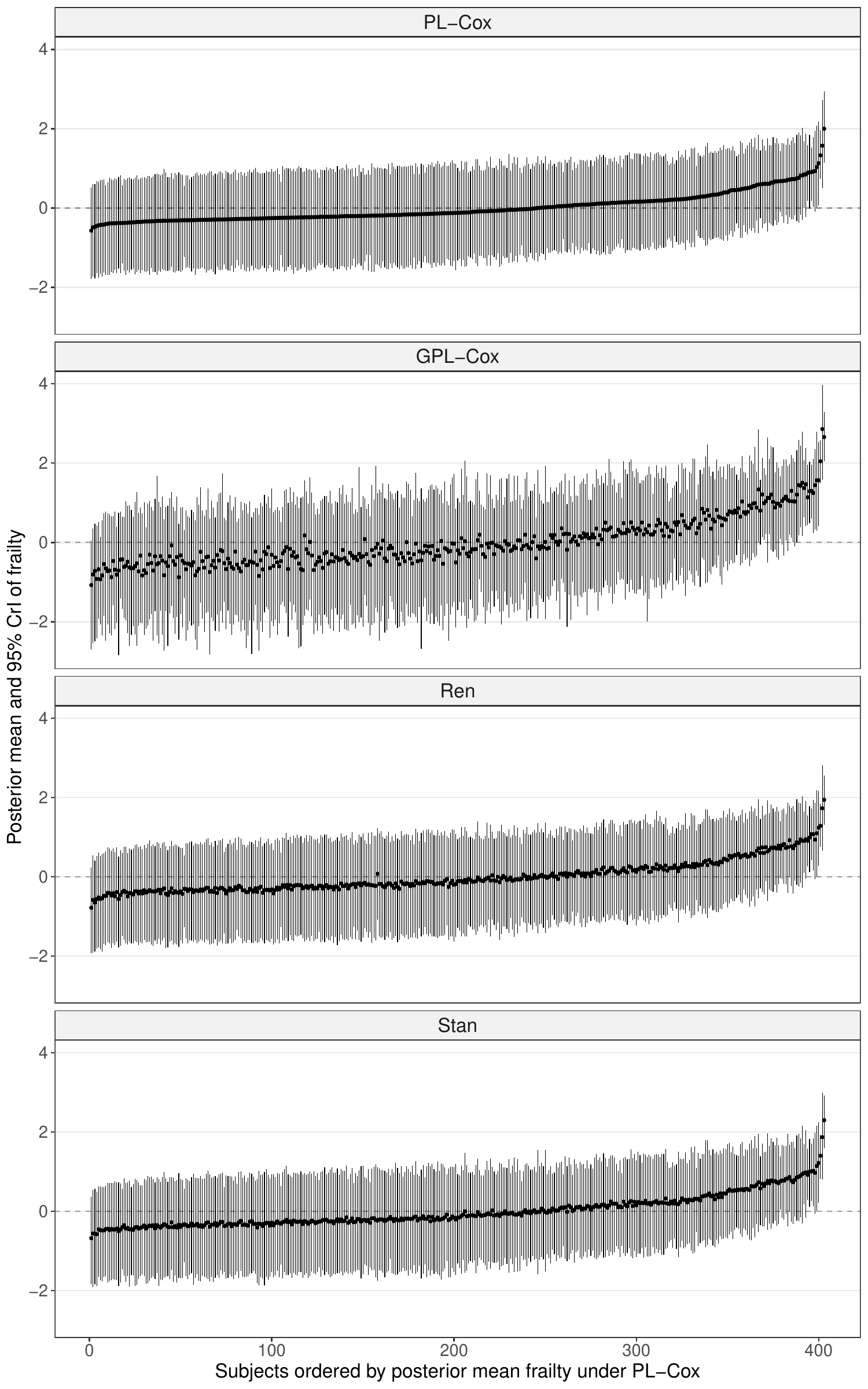}
	\caption{Caterpillar plot for log-frailty for each method. }
	\label{assume}
\end{figure}

\newpage

\subsection{Right Heart Catheterization Data\label{rhc}}

To apply the model in a practical setting, we analyzed the right heart catheterization (RHC) dataset derived from the Study to Understand Prognoses and Preferences for Outcomes and Risks of Treatments (SUPPORT), which was originally conducted to evaluate the effect of right heart catheterization during the initial care of critically ill patients in the intensive care unit (ICU) on survival time up to 30 days \citep{connorsEffectivenessRightHeart1996}. 
In this study, RHC was performed in 2{,}184 patients within the first 24 hours of ICU stay, while 3{,}551 patients were managed without RHC. 
Therefore, the dataset used in our analysis consisted of 5{,}735 patients.
The outcome of interest was the time to death, measured in days from study entry.
Among the 1{,}918 observed deaths, only 29 distinct event times were recorded, implying that all events occurred in tied blocks.
The largest tie block contained 189 deaths.
Figure D.1 displays the distribution of the number of deaths occurring at each event time, illustrating the presence of several extremely large tie blocks throughout the follow-up period.

Due to the presence of extremely large tie blocks, the Exact method could not be used to obtain parameter estimates.
Therefore, this dataset provides a useful setting for examining the behavior of different approaches to handling tied event times.
We fitted the Cox model using the Breslow and Efron, PL-Cox, and GPL-Cox methods.
For comparison, we also applied the method of \citet{tamanoEfficientGibbsSampling2025} (Tamano).
Table~\ref{tab:rhc_results} summarizes the estimated hazard ratios and corresponding 95\% intervals for the covariates considered.
The estimated hazard ratios were similar across all methods, with only minor differences.
However, the computational cost varied substantially: the PL-Cox and GPL-Cox methods required moderate computation times, whereas the Tamano method was considerably more computationally intensive.

The deviance information criterion (DIC) \citep{spiegelhalterBayesianMeasuresModel2002} was computed for the PL-Cox and GPL-Cox methods, with values of 32,379.6 and 19,821.1, respectively.
This result suggests that the GPL-Cox model performs better in datasets with many tied events, where ranking depth and tie structure provide important information that is not fully captured by the PL-Cox likelihood.

\begin{table}[t]
\centering
\caption{Posterior summaries of hazard ratios and frailty variance for the RHC data. Values are posterior means with 95\% confidence or credible intervals in parentheses. Time indicates the computation time (in seconds) for each method. ESS/sec denotes the effective sample size per second. RHC, right heart catheterization; Mean BP, Mean blood presure; WBC, white blood cell count; Resp. rate, Respiratory rate.}
\label{tab:rhc_results}
\begin{tabular}{@{}p{2.2cm}ccccc@{}}
\hline
 & Efron & Breslow & PL-Cox & GPL-Cox & Tamano \\
\hline
RHC &
\begin{tabular}{@{}c@{}}1.21 \\ {[}1.10, 1.33{]}\end{tabular} &
\begin{tabular}{@{}c@{}}1.21 \\ {[}1.10, 1.32{]}\end{tabular} &
\begin{tabular}{@{}c@{}}1.21 \\ {[}1.10, 1.33{]}\end{tabular} &
\begin{tabular}{@{}c@{}}1.23 \\ {[}1.12, 1.36{]}\end{tabular} &
\begin{tabular}{@{}c@{}}1.21 \\ {[}1.10, 1.33{]}\end{tabular} \\

Age &
\begin{tabular}{@{}c@{}}1.01 \\ {[}1.01, 1.01{]}\end{tabular} &
\begin{tabular}{@{}c@{}}1.01 \\ {[}1.01, 1.01{]}\end{tabular} &
\begin{tabular}{@{}c@{}}1.01 \\ {[}1.01, 1.01{]}\end{tabular} &
\begin{tabular}{@{}c@{}}1.01 \\ {[}1.01, 1.01{]}\end{tabular} &
\begin{tabular}{@{}c@{}}1.01 \\ {[}1.01, 1.01{]}\end{tabular} \\

Sec (female) &
\begin{tabular}{@{}c@{}}0.98 \\ {[}0.90, 1.08{]}\end{tabular} &
\begin{tabular}{@{}c@{}}0.98 \\ {[}0.90, 1.08{]}\end{tabular} &
\begin{tabular}{@{}c@{}}0.98 \\ {[}0.90, 1.08{]}\end{tabular} &
\begin{tabular}{@{}c@{}}0.99 \\ {[}0.91, 1.08{]}\end{tabular} &
\begin{tabular}{@{}c@{}}0.98 \\ {[}0.90, 1.07{]}\end{tabular} \\

Mean BP &
\begin{tabular}{@{}c@{}}1.00 \\ {[}0.99, 1.00{]}\end{tabular} &
\begin{tabular}{@{}c@{}}1.00 \\ {[}0.99, 1.00{]}\end{tabular} &
\begin{tabular}{@{}c@{}}1.00 \\ {[}0.99, 1.00{]}\end{tabular} &
\begin{tabular}{@{}c@{}}1.00 \\ {[}0.99, 1.00{]}\end{tabular} &
\begin{tabular}{@{}c@{}}1.00 \\ {[}0.99, 1.00{]}\end{tabular} \\

WBC &
\begin{tabular}{@{}c@{}}1.00 \\ {[}1.00, 1.01{]}\end{tabular} &
\begin{tabular}{@{}c@{}}1.00 \\ {[}1.00, 1.01{]}\end{tabular} &
\begin{tabular}{@{}c@{}}1.00 \\ {[}1.00, 1.01{]}\end{tabular} &
\begin{tabular}{@{}c@{}}1.00 \\ {[}1.00, 1.01{]}\end{tabular} &
\begin{tabular}{@{}c@{}}1.00 \\ {[}1.00, 1.01{]}\end{tabular} \\

Heart rate &
\begin{tabular}{@{}c@{}}1.00 \\ {[}1.00, 1.00{]}\end{tabular} &
\begin{tabular}{@{}c@{}}1.00 \\ {[}1.00, 1.00{]}\end{tabular} &
\begin{tabular}{@{}c@{}}1.00 \\ {[}1.00, 1.00{]}\end{tabular} &
\begin{tabular}{@{}c@{}}1.00 \\ {[}1.00, 1.00{]}\end{tabular} &
\begin{tabular}{@{}c@{}}1.00 \\ {[}1.00, 1.00{]}\end{tabular} \\

Resp.\ rate &
\begin{tabular}{@{}c@{}}1.00 \\ {[}1.00, 1.00{]}\end{tabular} &
\begin{tabular}{@{}c@{}}1.00 \\ {[}1.00, 1.00{]}\end{tabular} &
\begin{tabular}{@{}c@{}}1.00 \\ {[}1.00, 1.00{]}\end{tabular} &
\begin{tabular}{@{}c@{}}1.00 \\ {[}1.00, 1.00{]}\end{tabular} &
\begin{tabular}{@{}c@{}}1.00 \\ {[}0.99, 1.00{]}\end{tabular} \\

Creatinine &
\begin{tabular}{@{}c@{}}1.03 \\ {[}1.01, 1.05{]}\end{tabular} &
\begin{tabular}{@{}c@{}}1.03 \\ {[}1.01, 1.05{]}\end{tabular} &
\begin{tabular}{@{}c@{}}1.04 \\ {[}1.01, 1.06{]}\end{tabular} &
\begin{tabular}{@{}c@{}}1.03 \\ {[}1.01, 1.05{]}\end{tabular} &
\begin{tabular}{@{}c@{}}1.03 \\ {[}1.02, 1.05{]}\end{tabular} \\

Temperature &
\begin{tabular}{@{}c@{}}0.98 \\ {[}0.95, 1.01{]}\end{tabular} &
\begin{tabular}{@{}c@{}}0.98 \\ {[}0.96, 1.01{]}\end{tabular} &
\begin{tabular}{@{}c@{}}0.98 \\ {[}0.95, 1.01{]}\end{tabular} &
\begin{tabular}{@{}c@{}}0.98 \\ {[}0.95, 1.00{]}\end{tabular} &
\begin{tabular}{@{}c@{}}0.98 \\ {[}0.95, 1.01{]}\end{tabular} \\

\hline
Time (sec) & & & 376.0 & 97.7 & 9844.4 \\
\begin{tabular}{@{}l}
Median \\
ESS/sec ($\beta$)
\end{tabular}
& & & 0.351 & 0.849 & 0.089 \\
\hline
\end{tabular}
\end{table}

\begin{figure}[H]
	\centering
	\includegraphics[width=0.7\linewidth]{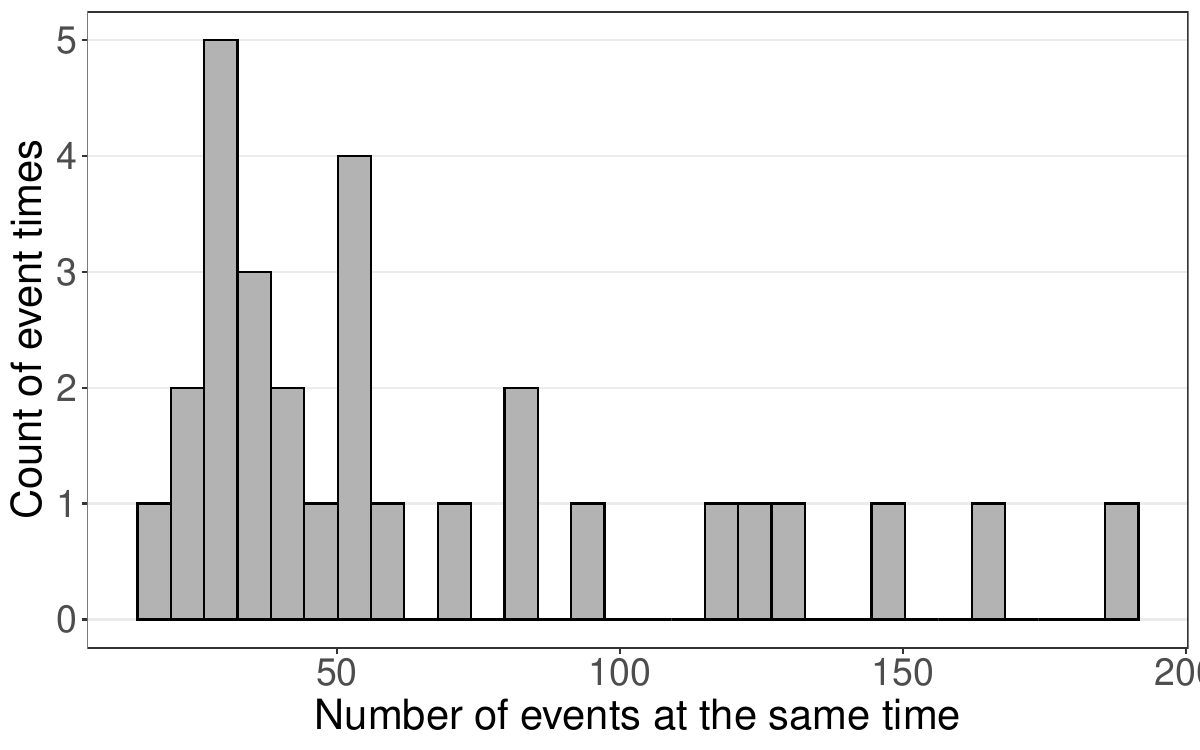}
	\caption{Distribution of tie block sizes.}
	\label{assume}
\end{figure}

\begin{figure}[H]
	\centering
	\includegraphics[width=0.9\linewidth]{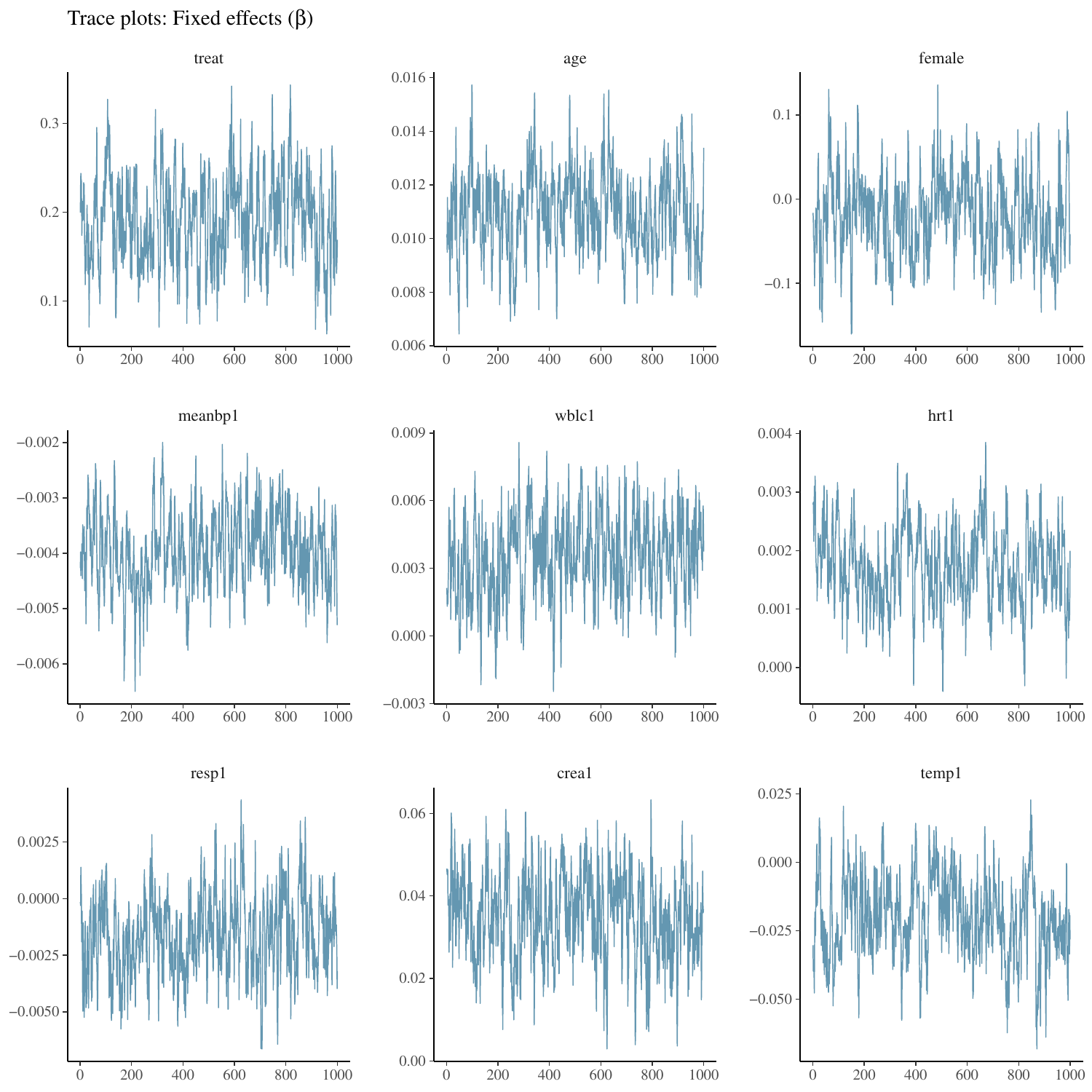}
	\caption{Trace plot for PL-Cox. }
	\label{assume}
\end{figure}

\begin{figure}[H]
	\centering
	\includegraphics[width=0.9\linewidth]{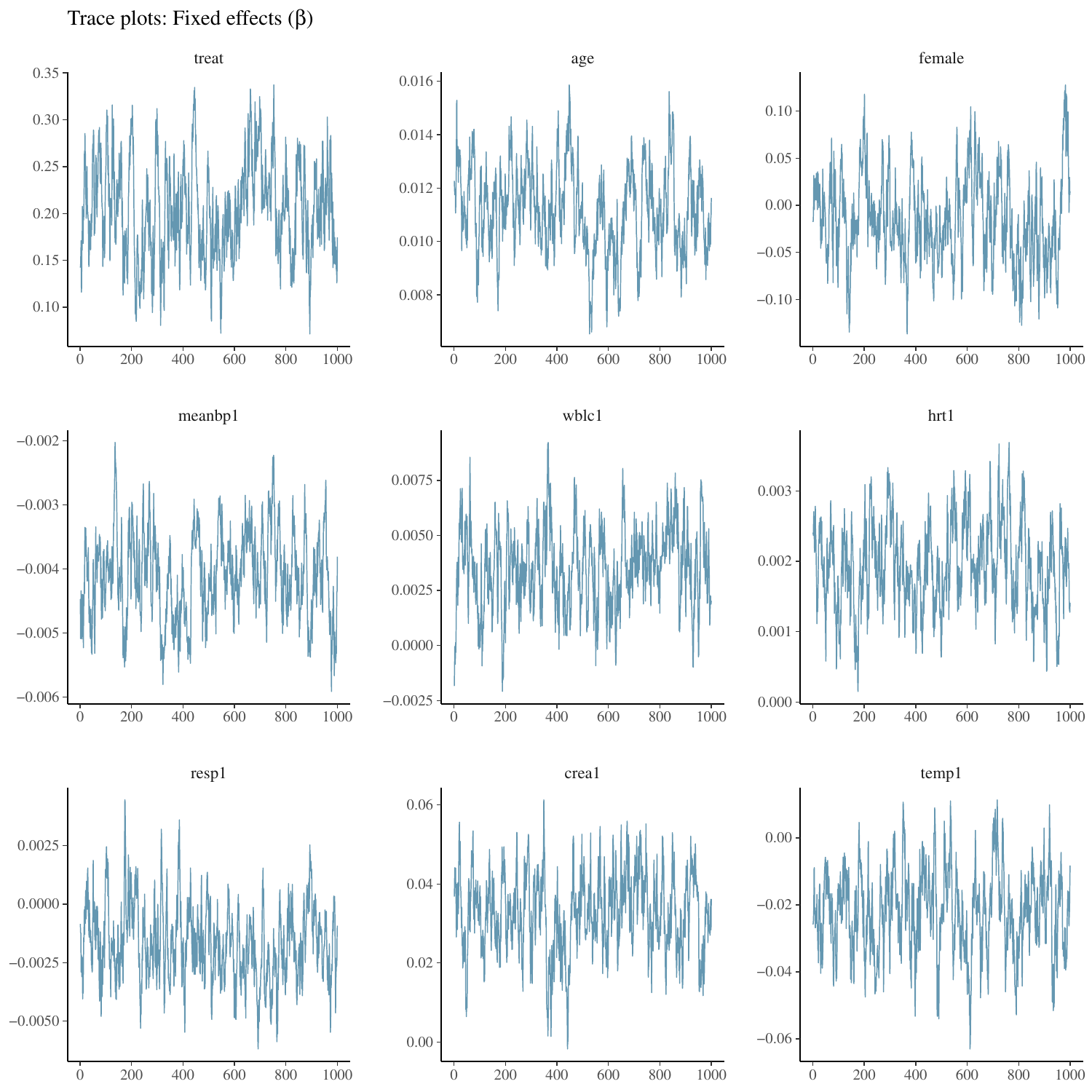}
	\caption{Trace plot for GPL-Cox. }
	\label{assume}
\end{figure}

\newpage

% %---------------------------------------------%
% %                Reference                    %
% %---------------------------------------------%
% \bibliographystyle{biom}
% \bibliography{library}
% 
% \newpage

% \end{thebibliography}
% \end{document}

\end{document}